\definecolor{myurlcolor}{rgb}{0,0,0.9}
\newcommand{\proj}[1]{| #1\rangle\!\langle #1 |}
\newcommand{\iinner}[2]{\langle #1 | #2\rangle}
\DeclareMathOperator{\trace}{Tr}
\newcommand{\Ptr}[2]{\trace_{#1}\Pa{#2}}
\newcommand{\Tr}[1]{\Ptr{}{#1}}
\newcommand{\Innerm}[3]{\left\langle #1 \left| #2 \right| #3 \right\rangle}
\newcommand{\Pa}[1]{\left[#1\right]}
\newcommand{\norm}[1]{\left\lVert #1 \right\rVert}
\theoremstyle{plain}
\newtheorem{thm}{Theorem}
\newtheorem{lem}[thm]{Lemma}
\newtheorem{prop}[thm]{Proposition}
\newcommand*{\myproofname}{Proof}
\newenvironment{mproof}[1][\myproofname]{\begin{proof}[#1]}{\end{proof}}
\def\ot{\otimes}
\def\real{\mathbb{R}}
\def\cI{\mathcal{I}}
\def\cM{\mathcal{M}}
\def\cD{\mathcal{D}}
\def\cH{\mathcal{H}}
\def\cE{\mathcal{E}}
\begin{document}

  \author{Kaifeng Bu}
 \email{bkf@zju.edn.cn}
 \affiliation{School of Mathematical Sciences, Zhejiang University, Hangzhou 310027, PR~China}
 \author{Uttam Singh}
  \email{uttamsingh@hri.res.in}
 \affiliation{Harish-Chandra Research Institute, Allahabad, 211019, India}
\affiliation{Homi Bhabha National Institute, Training School Complex, Anushakti Nagar,
     Mumbai 400085, India}
   \author{Shao-Ming Fei}
   \email{feishm@cnu.edn.cn}
 \affiliation{School of Mathematical Sciences, Capital Normal University, Beijing 100048, PR China}
  \affiliation{Max-Planck-Institute for Mathematics in the Sciences, 04103 Leipzig, Germany}
  \author{Arun Kumar Pati}
  \email{akpati@hri.res.in}
\affiliation{Harish-Chandra Research Institute, Allahabad, 211019, India}
  \author{Junde Wu}
   \email{wjd@zju.edn.cn}
 \affiliation{School of Mathematical Sciences, Zhejiang University, Hangzhou 310027, PR~China}

\title{Max- relative entropy of coherence : an operational coherence measure}

\begin{abstract}
The operational characterization of quantum coherence is the corner stone in the development of  resource theory of coherence.
We introduce a new coherence quantifier based on max-relative entropy. We prove that  max-relative entropy of coherence is directly related to the maximum overlap with maximally coherent states under a particular class of operations, which provides an operational interpretation of max-relative entropy of coherence. Moreover, we show that, for any coherent state, there are examples of subchannel discrimination problems such that this coherent state allows for a higher probability of successfully discriminating subchannels than that of all incoherent states.
This advantage of coherent states in subchannel discrimination can be exactly
characterized by the max-relative entropy of coherence.
By introducing suitable smooth max-relative entropy of coherence,
we prove that the smooth max-relative entropy of coherence
provides a lower bound of one-shot coherence cost, and the max-relative entropy of
coherence is equivalent to the relative entropy of coherence in asymptotic limit.
Similar to max-relative entropy of coherence, min-relative entropy of coherence has also been investigated. We show that the min-relative entropy of coherence provides an upper bound of
one-shot coherence distillation, and in asymptotic limit the min-relative
entropy of coherence is equivalent to the relative entropy of coherence.
\end{abstract}

\maketitle

\section{Introduction}
Quantumness in a single system is characterized by quantum coherence, namely, the superposition of
a state in a given reference basis. The coherence of a state may quantify the capacity of a system
in many quantum manipulations, ranging from  metrology \cite{Giovannetti2011} to thermodynamics \cite{Lostaglio2015,Lostaglio2015NC} .
Recently, various efforts have been made to develop a resource theory of coherence \cite{Baumgratz2014,Girolami2014,Streltsov2015,Winter2016,Killoran2016,Chitambar2016,Chitambar2016a}.
One of the earlier resource theories is that of quantum entanglement \cite{HorodeckiRMP09}, which is a basic resource for various quantum information processing protocols such as
superdense coding \cite{Bennett1992}, remote state preparation \cite{Pati2000,Bennett2001} and quantum teleportation \cite{Bennett1993}. Other notable examples include the resource theories of asymmetry \cite{Bartlett2007,Gour2008,Gour2009,Marvian2012,Marvian2013,Marvian14,Marvian2014}, thermodynamics \cite{Fernando2013}, and steering \cite{Rodrigo2015}. One of the main advantages that a resource theory offers is the lucid quantitative and operational description as well as the manipulation of the relevant resources at ones disposal, thus operational
characterization of quantum coherence is required in the resource theory of coherence.

A resource theory is usually composed of two basic elements: free states and free operations. The
set of allowed states (operations) under the given constraint is what we call the set of free states (operations).
Given a fixed basis, say $\set{\ket{i}}^{d-1}_{i=0}$ for a d-dimensional system, any quantum state which is diagonal
in the reference basis is called an incoherent state and is a free state in the resource theory of coherence. The set of
incoherent states is denoted by $\cI$. Any quantum state can be mapped into an incoherent state by a
full dephasing operation $\Delta$, where $\Delta(\rho):=\sum^{d-1}_{i=0}\Innerm{i}{\rho}{i}\proj{i}$.
However, there is no general consensus on the set of free operations in the resource theory of coherence.
We refer the following types of free operations in this work:
maximally incoherent operations (MIO) \cite{Chitambar2016b}, incoherent operations (IO) \cite{Baumgratz2014}, dephasing-covariant operations (DIO) \cite{Chitambar2016b} and strictly incoherent operations (SIO) \cite{Chitambar2016a,Chitambar2016b}.
By maximally incoherent operation (MIO), we refer to the maximal set of quantum operations
$\Phi$ which maps the incoherent states into incoherent states, i.e., $\Phi(\cI)\subset\cI$ \cite{Chitambar2016b}.
Incoherent operations (IO) is the set of all  quantum operations $\Phi$ that
admit a set of Kraus operators $\set{K_i}$ such that $\Phi(\cdot)=\sum_iK_i(\cdot)K^\dag_i$ and
$K_i\cI K^\dag_i\subset\cI$ for any $i$ \cite{Baumgratz2014}.
Dephasing-covariant operations (DIO) are the quantum operations $\Phi$ with
$[\Delta,\Phi]=0$ \cite{Chitambar2016b}. Strictly incoherent operations (SIO) is the set of all quantum operations
$\Phi$ admitting a set of Kraus operators $\set{K_i}$ such that $\Phi(\cdot)=\sum_iK_i(\cdot)K^\dag_i$ and
$\Delta (K_i\rho K^\dag_i)=K_i\Delta(\rho) K^\dag_i$ for any $i$ and any quantum state $\rho$.
Both IO and DIO are subsets of MIO , and SIO is a subset of both IO and DIO \cite{Chitambar2016b}.
However, IO and DIO are two different types of free operations and there is no inclusion relationship between
them (The operational gap between them can be seen in \cite{Bu2016}).

Several operational coherence quantifiers have been introduced as candidate coherence measures, subjecting to physical requirements such as monotonicity under certain type of free operations in the resource theory
of coherence. One canonical measure to quantify coherence is the relative entropy of coherence, which
is defined as $C_r(\rho)=S(\Delta(\rho))-S(\rho)$, where $S(\rho)=-\Tr{\rho\log\rho}$ is the von Neumann entropy \cite{Baumgratz2014}.
The relative entropy of coherence plays an important role in the process of coherence distillation, in which it can be interpreted as the optimal rate to distill
maximally coherent state from a given state $\rho$ by IO in the asymptotic limit \cite{Winter2016}.
Besides, the $l_1$ norm of coherence \cite{Baumgratz2014}, which is defined
as $C_{l_1}(\rho)=\sum_{i\neq j}|\rho_{ij}|$ with $\rho_{ij}=\Innerm{i}{\rho}{j}$, has also attracted
lots of discussions about its operational interpretation \cite{Rana2016}.
Recently, an operationally motivated coherence measure- robustness of coherence (RoC) -
has been introduced, which quantifies the minimal mixing required to erase the coherence in a given quantum state \cite{Napoli2016,Piani2016}.
There is growing concern about the operational characterization of quantum coherence
and further investigations are needed to provide an explicit and rigorous
operational interpretation of coherence.

In this letter, we introduce a new coherence measure based on max-relative entropy
and focus on its operational characterizations.
Max- and min- relative entropies have been introduced and investigated in \cite{Datta2009IEEE,Datta2009,Brandao2011,Buscemi2010}. The well-known (conditional and unconditional) max- and min- entropies  \cite{Renner2004IEEE,Renner2005phd}
can be obtained from these two quantities. It has been shown that max- and min-entropies are of operational
significance in the applications ranging from data compression \cite{Renes2012,Renner2004IEEE} to state merging \cite{Horodecki2007} and security of key \cite{Buhrman2006,Konig2009IEEE}. Besides, max- and min- relative entropies have been used to define entanglement monotone and their operational significance in the manipulation of entanglement has been provided in \cite{Datta2009IEEE,Datta2009,Brandao2011,Buscemi2010}.
Here, we define max-relative of coherence $C_{\max}$ based on max-relative entropy and investigate the properties
of $C_{\max}$. We prove that max-relative entropy of coherence for a given state $\rho$ is the maximum achievable
overlap with maximally coherent states under DIO, IO and SIO, which
gives rise to an operational interpretation of $C_{\max}$ and shows the equivalence among DIO, IO and SIO in an operational task.
Besides, we show that max-relative entropy of coherence characterizes the
role of  quantum states in an operational task: subchannel discrimination.
Subchannel discrimination is an important quantum information task which distinguishes the branches of a quantum evolution
for a quantum system to undergo \cite{Piani2015PRL}. It has been shown that every entangled or steerable
state is a resource in some instance of subchannel discrmination problems \cite{Piani2009,Piani2015PRL}. Here, we prove that
that every coherent state is useful in the subchannel discrimination of certain instruments, where the usefulness can be quantified by the
max-relative entropy of coherence of the given quantum state.
By smoothing the max-relative entropy of coherence, we introduce $\epsilon-$smoothed max-relative entropy of coherence
$C^{\epsilon}_{\max}$ for any fixed $\epsilon>0$ and show that the smooth max-relative entropy gives an lower bound of  coherence cost in  one-shot version. Moreover,
 we prove that for any quantum state,
max-relative entropy of coherence  is equivalent to
the relative entropy of coherence in asymptotic limit.

Corresponding to the max-relative entropy of coherence, we also introduce the min-relative entropy of coherence $C_{\min}$ by min-relative entropy, which is not a proper coherence
measure as it may increase on average under IO.
However, it gives an upper bound for the maximum overlap between the given states and the set of incoherent states. This implies that min-relative entropy of coherence
 also provides a lower bound of a well-known coherence measure, geometry of coherence \cite{Streltsov2015}.
By smoothing the min-relative entropy of coherence, we introduce $\epsilon-$smoothed min-relative entropy of coherence
$C^{\epsilon}_{\min}$ for any fixed $\epsilon>0$ and show that the smooth max-relative entropy gives an upper bound of  coherence distillation in  one-shot version.
Furthermore, we show that the min-relative of coherence is also
equivalent to distillation of coherence in asymptotic limit. The relationship among $C_{\min}$, $C_{\max}$ and other
coherence measures has also been investigated.

\section{Main results}
Let $\cH$ be a d-dimensional Hilbert space and $\cD(\cH)$ be the set
of density operators acting on $\cH$. Given two operators  $\rho$ and $\sigma$ with
$\rho\geq 0$, $\Tr{\rho}\leq 1$ and $\sigma\geq 0$,  the max-relative entropy of $\rho$ with respect to $\sigma$ is defined by \cite{Datta2009IEEE,Datta2009},
\begin{eqnarray}
D_{\max}(\rho||\sigma):&=&\min\set{\lambda:\rho\leq2^\lambda \sigma}.
\end{eqnarray}
We introduce a new coherence quantifier by max-relative  entropy: max-relative entropy of coherence $C_{\max}$,
\begin{eqnarray}\label{cmax}
C_{\max}(\rho):=\min_{\sigma\in \cI}D_{\max}(\rho||\sigma),
\end{eqnarray}
where $\cI$ is the set of incoherent states in $\cD(\cH)$.

We now show that $C_{\max}$ satisfies the conditions a coherence measure needs to fulfil. First, it is obvious that
$C_{\max}(\rho)\geq 0$. And since $D_{\max}(\rho||\sigma)=0$ iff $\rho=\sigma$ \cite{Datta2009IEEE}, we have $C_{\max}(\rho)=0$ if and only if $\rho\in\cI$.
Besides, as $D_{\max}$ is monotone under CPTP maps \cite{Datta2009IEEE},  we have
$C_{\max}(\Phi(\rho))\leq C_{\max}(\rho)$ for any incoherent operation $\Phi$.
Moreover, $C_{\max}$ is nonincreasing on average under incoherent operations, that is,
for any incoherent operation $\Phi(\cdot)=\sum_i K_i(\cdot)K^\dag_i$
with $K_i\cI K^\dag_i\subset\cI$,
$\sum_i p_iC_{\max}(\tilde{\rho}_i)\leq C_{\max}(\rho)$,
where $p_i=\Tr{K_i\rho K^\dag_i}$ and $\tilde{\rho}_i=K_i\rho K^\dag_i/p_i$, see proof in Supplemental Material \cite{supplement}.

{\it Remark} We have proven that the max-relative entropy of coherence $C_{\max}$ is a bona fide measure of coherence.
Since $D_{\max}$ is not jointly convex, we may not expect that $C_{\max}$ has the convexity, which is a desirable (although
not a fundamental) property for a coherence quantifier. However, we can prove that for $\rho=\sum^n_ip_i\rho_i$,
$C_{\max}(\rho)\leq\max_{i}C_{\max}(\rho_i)$.
Suppose that $C_{\max}(\rho_i)=D_{\max}(\rho_i||\sigma^*_i)$ for some $\sigma^*_i$, then
from the fact that  $D_{\max}(\sum_ip_i\rho_i||\sum_ip_i\sigma_i)\leq \max_iD_{\max}(\rho_i||\sigma_i)$ \cite{Datta2009IEEE}, we have
$C_{\max}(\rho)\leq D_{\max}(\sum_ip_i\rho_i||\sum_ip_i\sigma^*_i) \leq \max_iD_{\max}(\rho_i||\sigma^*_i)=\max_iC_{\max}(\rho_i)$. Besides, although
$C_{\max}$ is not convex, we can obtain a proper coherence measure with
convexity from $C_{\max}$ by the approach of convex roof extension, see Supplemental Material \cite{supplement}.

In the following, we concentrate on the operational characterization of the
max-relative entropy of coherence, and provide
operational interpretations of $C_{\max}$.

\textit{\textbf{Maximum overlap with maximally coherent states.}}---At first we show that
$2^{C_{\max}}$ is equal to the maximum overlap with the maximally coherent state that can be achieved by  DIO, IO and SIO.

\begin{thm}\label{thm:op_max}
 Given a quantum state $\rho\in\cD(\cH)$, we have
\begin{eqnarray}
2^{C_{\max}(\rho)}&=&d\max_{
\mathcal{E},\ket{\Psi}} F(\mathcal{E}(\rho),\proj{\Psi})^2,
\end{eqnarray}
where $F(\rho,\sigma)=\Tr{|\sqrt{\rho}\sqrt{\sigma}|}$ is the fidelity between states $\rho$ and $\sigma$ \cite{Nielsen10},
$\ket{\Psi}\in\cM$ and $\cM$ is the set of maximally coherent states in $\cD(\cH)$, $\mathcal{E}$ belongs to either
DIO or IO or SIO.
\end{thm}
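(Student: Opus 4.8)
The plan is to reduce the right-hand side to a linear overlap, then bracket it between an upper bound valid for all of MIO and a lower bound attained by a single operation lying simultaneously in SIO, IO and DIO. Write $\mu:=2^{C_{\max}(\rho)}$. Since $\ket{\Psi}$ is pure, $\sqrt{\proj{\Psi}}=\proj{\Psi}$ and a one-line computation gives $|\sqrt{\mathcal{E}(\rho)}\tinyspace\proj{\Psi}|=\sqrt{\Innerm{\Psi}{\mathcal{E}(\rho)}{\Psi}}\tinyspace\proj{\Psi}$, so that $F(\mathcal{E}(\rho),\proj{\Psi})^2=\Innerm{\Psi}{\mathcal{E}(\rho)}{\Psi}$ and the claim becomes $\mu=d\max_{\mathcal{E},\Psi}\Innerm{\Psi}{\mathcal{E}(\rho)}{\Psi}$. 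For the upper bound I would take the optimal incoherent $\sigma^*\in\cI$ with $\rho\leq\mu\sigma^*$, which exists by the definitions of $C_{\max}$ and $D_{\max}$. For \emph{any} MIO map $\mathcal{E}$ (hence any DIO, IO or SIO map) positivity gives $\mathcal{E}(\rho)\leq\mu\tinyspace\mathcal{E}(\sigma^*)$ with $\mathcal{E}(\sigma^*)\in\cI$; since every maximally coherent $\ket{\Psi}\in\cM$ satisfies $\abs{\iinner{\Psi}{i}}^2=1/d$ we get $\Innerm{\Psi}{\mathcal{E}(\sigma^*)}{\Psi}=1/d$, whence $d\Innerm{\Psi}{\mathcal{E}(\rho)}{\Psi}\leq\mu$ for all three classes at once.

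For the converse I would first recast $C_{\max}$ as a semidefinite program. Writing $\sigma=\sum_i s_i\proj{i}$ and substituting $t_i=\mu s_i$ turns $\mu$ into $\min\set{\Tr{T}: T \text{ diagonal},\, T\geq\rho}$. Slater's condition holds (take $T=cI$ with $c$ above the top eigenvalue of $\rho$), so strong duality gives $\mu=\max\set{\Tr{X\rho}: X\geq0,\ \Delta(X)\leq I}$; since raising a deficient diagonal entry $X_{ll}$ to $1$ keeps $X\geq0$ and cannot decrease $\Tr{X\rho}$, the optimizer $X^*$ may be taken with $\Delta(X^*)=I$, i.e.\ $X^*$ is a correlation matrix with $\Tr{X^*\rho}=\mu$.

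The decisive step, and the one I expect to be the main obstacle, is converting $X^*$ into an admissible free operation. Setting $C=\sqrt{X^*}$ (Hermitian, with $\sum_i\abs{C_{il}}^2=X^*_{ll}=1$) I would define the diagonal Kraus operators $K_i=\sum_l C_{il}\proj{l}$. Then $\sum_i K_i^\dag K_i=I$, and because each $K_i$ is diagonal the channel $\mathcal{E}(\cdot)=\sum_i K_i(\cdot)K_i^\dag$ is at once SIO, IO and DIO (it commutes with $\Delta$); its action is the Schur multiplier $\Br{\mathcal{E}(\rho)}_{ll'}=X^*_{l'l}\tinyspace\rho_{ll'}$. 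Choosing the canonical $\ket{\Psi}=\frac{1}{\sqrt{d}}\sum_l\ket{l}$ then yields $d\Innerm{\Psi}{\mathcal{E}(\rho)}{\Psi}=\Tr{X^*\rho}=\mu$. As this single $\mathcal{E}$ belongs to each class while the MIO estimate caps each class by $\mu$, all three maxima coincide and equal $\mu/d$, giving the theorem. The delicate points to check carefully are the attainment of a unit-diagonal optimizer $X^*$ (strong duality plus the diagonal-lifting argument) and the verification that the diagonal-Kraus channel genuinely meets all three incoherence constraints simultaneously.
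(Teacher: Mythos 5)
Your proposal is correct, and its backbone---strong SDP duality turning $2^{C_{\max}(\rho)}=\min\set{\Tr{T}:T\ \text{diagonal},\,T\geq\rho}$ into $\max\set{\Tr{X\rho}:X\geq0,\,\Delta(X)\leq \mathbb{I}}$, followed by lifting the optimizer to $\Delta(X^*)=\mathbb{I}$ and realizing it with a diagonal-Kraus channel---is the same route the paper takes (its Lemmas 1--3). You deviate in two places, both defensibly. For the upper bound you argue directly from $\rho\leq 2^{C_{\max}(\rho)}\sigma^*$ and positivity of the channel, using only that $\mathcal{E}$ maps incoherent states to incoherent states and that $\Innerm{\Psi}{\tau}{\Psi}=1/d$ for $\tau\in\cI$; this is more elementary than the paper's bijection between free channels and dual-feasible operators, and it buys you the stronger statement that even the full MIO class cannot exceed $2^{C_{\max}(\rho)}/d$, so the theorem's equality extends to MIO for free. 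For the achievability you build the channel from $C=\sqrt{X^*}$ as a Schur multiplier, whereas the paper diagonalizes $\hat{\tau}=X^*/d=\sum_i\lambda_i\proj{\psi_i}$ and assembles Kraus operators $\sqrt{\lambda_i}K^{(i)\dag}_n$ from the amplitudes of each $\ket{\psi_i}$; both yield diagonal Kraus operators, hence a map simultaneously in SIO, IO and DIO, and your version is arguably tidier. One small omission relative to the paper: you fix $\ket{\Psi_+}$ for attainment but should note (as the paper's Lemma 3 does) that the maximum over all $\ket{\Psi}\in\cM$ is not larger, which follows either from your MIO upper bound holding for every $\ket{\Psi}\in\cM$ (it does, since $\abs{\iinner{\Psi}{i}}^2=1/d$ for all of them) or from absorbing the diagonal unitary $U_\Psi$ into $\mathcal{E}$; as written your upper-bound paragraph already covers this, so the gap is only one of emphasis.
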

(See proof in Supplemental Material \cite{supplement}.)

Here although IO, DIO and SIO are different types of free operations in resource theory of coherence \cite{Chitambar2016b,Bu2016},
they have the same behavior in the maximum overlap with the maximally coherent states.
From the view of coherence distillation \cite{Winter2016}, the maximum
overlap with maximally coherent states can be regarded as the distillation of coherence from
given states under IO, DIO and SIO. As fidelity can be used to define certain distance, thus
$C_{\max}(\rho)$  can also be viewed as the distance between the set of maximally coherent state and the
set of  $\set{\cE(\rho)}_{\cE\in\theta}$, where $\theta=  DIO, IO ~\text{or}~ SIO$.

Besides distillation of coherence, another kind of coherence manipulation is the coherence cost \cite{Winter2016}.
Now we study the one-shot version of coherence cost under MIO based on smooth max-relative entropy of coherence.
We define the one-shot coherence cost of a quantum state $\rho$ under MIO as
\begin{eqnarray*}
C^{(1),\epsilon}_{C,MIO}(\rho):=\min_{\substack{
\mathcal{E}\in MIO\\
M\in \mathbb{Z}
}}
\set{\log M: F(\rho,\mathcal{E}(\proj{\Psi^M_+}))^2\geq1-\epsilon
},
\end{eqnarray*}
where $\ket{\Psi^M_+}=\frac{1}{\sqrt{M}}\sum^M_{i=1}\ket{i}$, $\mathbb{Z}$ is the set of integer and
$\epsilon>0$. The $\epsilon$-smoothed max-relative entropy of coherence of a quantum state $\rho$ is defined by,
\begin{eqnarray}
C^{\epsilon}_{\max}(\rho):=\min_{\rho'\in B_{\epsilon}(\rho)}C_{\max}(\rho'),
\end{eqnarray}
where $B_\epsilon(\rho):=\set{\rho'\geq0:\norm{\rho'-\rho}_1\leq \epsilon, \Tr{\rho'}\leq\Tr{\rho}}$.
We find that the
smooth max-relative entropy of coherence gives a lower bound of one-shot coherence cost.
Given a quantum state $\rho\in\cD(\cH)$, for any $\epsilon>0$,
\begin{eqnarray}
C^{\epsilon'}_{\max}(\rho)\leq  C^{(1),\epsilon}_{C,MIO}(\rho),
\end{eqnarray}
where $\epsilon'=2\sqrt{\epsilon}$, see proof in Supplemental Material \cite{supplement}.

Besides, in view of smooth max-relative entropy of coherence, we can obtain the
equivalence between max-relative entropy of coherence and relative entropy of coherence in the asymptotic limit.
Since  relative entropy of coherence is the optimal rate to distill
maximally coherent state from a given state under certain free operations in the asymptotic limit \cite{Winter2016},
the smooth max-relative entropy of coherence in asymptotic limit is just the distillation of coherence. That is,
given a quantum state $\rho\in\cD(\cH)$, we have
\begin{eqnarray}\label{eq:max_vs_r}
\lim_{\epsilon\to 0}\lim_{n\to \infty}
\frac{1}{n}C^{\epsilon}_{\max}(\rho^{\ot n})
=C_r(\rho).
\end{eqnarray}
(The proof is presented in Supplemental Material \cite{supplement}.)

\textit{\textbf{Maximum advantage achievable in subchannel discrimination.}}--
Now, we investigate  another quantum information processing task: subchannel discrimination, which can also provide an operational
interpretation of $C_{\max}$. Subchannel discrimination is an important quantum information task which is used to identify the branch of a quantum evolution
to undergo. We consider some special instance of subchannel discrimination problem to show the advantage of coherent states.

A linear completely positive and trace non-increasing map $\cE$ is called a subchannel. If
a subchannel $\cE$ is trace preserving, then $\cE$ is called a channel.
An instrument $\mathfrak{I}=\set{\cE_a}_a$ for a channel $\cE$ is a collection of subchannels $\cE_a$ with
$\cE=\sum_a\cE_a$  and every instrument has its physical realization \cite{Piani2015PRL}. A dephasing covariant instrument $\mathfrak{I}^D$ for a DIO $\cE$
is a collection of subchannels $\set{\cE_a}_a$ such that $\cE=\sum_a\cE_a$. Similarly, we can define
incoherent instrument $\mathfrak{I}^I$ and strictly incoherent instrument $\mathfrak{I}^S$ for channel $\cE\in IO$ and $\cE\in SIO$ respectively.

Given an instrument $\mathfrak{I}=\set{\cE_a}_a$ for a quantum channel $\cE$, let us consider a Positive Operator Valued Measurement (POVM)
$\set{M_b}_b$ with $\sum_bM_b=\mathbb{I}$. The probability of successfully discriminating the subchannels in the instrument
$\mathfrak{I}$ by POVM $\set{M_b}_b$ for input state $\rho$ is given by
\begin{eqnarray}
p_{\text{succ}}(\mathfrak{I},\set{M_b}_b, \rho)
=\sum_a\Tr{\cE_a(\rho)M_a}.
\end{eqnarray}
The optimal probability of success in subchannel discrimination of $\mathfrak{I}$ over
all POVMs is given by
\begin{eqnarray}
p_{\text{succ}}(\mathfrak{I},\rho)=\max_{\set{M_b}_b}
p_{\text{succ}}(\mathfrak{I},\set{M_b}_b, \rho).
\end{eqnarray}
If we restrict the input states to be incoherent ones, then the optimal probability of success among all incoherent states is given by
\begin{eqnarray}
p^{ICO}_{\text{succ}}(\mathfrak{I})=\max_{\sigma\in\cI}p_{\text{succ}}(\mathfrak{I},\sigma).
\end{eqnarray}
We have the following theorem.

\begin{thm}\label{thm:op_sub}
Given a quantum state $\rho$, $2^{C_{\max}(\rho)}$
is the maximal advantage achievable by $\rho$
compared with incoherent states in all subschannel discrimination problems
of dephasing-covariant, incoherent and strictly incoherent instruments,
\begin{eqnarray}
2^{C_{\max}(\rho)}&=&\max_{\mathfrak{I}}
\frac{p_{\text{succ}}(\mathfrak{I},\rho)}{p^{ICO}_{\text{succ}}(\mathfrak{I})},
\end{eqnarray}
where $\mathfrak{I}$ is either $\mathfrak{I}^D$ or $\mathfrak{I}^I$ or $\mathfrak{I}^S$, denoting the dephasing-covariant, incoherent and strictly incoherent instrument, respectively.
\end{thm}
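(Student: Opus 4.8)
The plan is to prove the two inequalities separately, establishing the quantity $2^{C_{\max}(\rho)}$ as both an upper and a lower bound for the maximal advantage ratio $\max_{\mathfrak{I}} p_{\text{succ}}(\mathfrak{I},\rho)/p^{ICO}_{\text{succ}}(\mathfrak{I})$. Let me write $C_{\max}(\rho)=D_{\max}(\rho\|\sigma^*)$ for an optimal incoherent $\sigma^*\in\cI$, so that $\rho\leq 2^{C_{\max}(\rho)}\sigma^*$ by definition of $D_{\max}$.

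\medskip

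First I would prove the upper bound, that the ratio never exceeds $2^{C_{\max}(\rho)}$. This direction should follow from the operator inequality $\rho\leq 2^{C_{\max}(\rho)}\sigma^*$ alone. For an arbitrary instrument $\mathfrak{I}=\{\cE_a\}_a$ of the relevant type and any POVM $\{M_b\}_b$, linearity and positivity of each subchannel $\cE_a$ give $\cE_a(\rho)\leq 2^{C_{\max}(\rho)}\cE_a(\sigma^*)$, whence
\begin{equation*}
p_{\text{succ}}(\mathfrak{I},\{M_b\}_b,\rho)=\sum_a\Tr{\cE_a(\rho)M_a}\leq 2^{C_{\max}(\rho)}\sum_a\Tr{\cE_a(\sigma^*)M_a}.
\end{equation*}
Maximizing over the same POVM on both sides and noting that $\sigma^*$ is a legitimate incoherent input, I obtain $p_{\text{succ}}(\mathfrak{I},\rho)\leq 2^{C_{\max}(\rho)}\,p_{\text{succ}}(\mathfrak{I},\sigma^*)\leq 2^{C_{\max}(\rho)}\,p^{ICO}_{\text{succ}}(\mathfrak{I})$. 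Since $\mathfrak{I}$ was arbitrary, this bounds the supremum of the ratio.

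\medskip

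The harder direction is the lower bound: I must exhibit, for the given $\rho$, a specific instrument of the required type whose advantage ratio attains $2^{C_{\max}(\rho)}$. The natural strategy is to dualize. Writing $C_{\max}$ as a semidefinite program (minimize $\lambda$ subject to $\rho\leq 2^\lambda\sigma$, $\sigma\in\cI$), the SDP dual should supply an optimal dual variable — a positive operator — from which I can read off the optimal measurement/subchannel data. Concretely, I expect the dual optimizer to determine a POVM element and a dephasing-covariant (resp.\ incoherent, strictly incoherent) instrument for which the chain of inequalities above becomes tight: the operator inequality $\rho\leq 2^{C_{\max}(\rho)}\sigma^*$ is saturated on the support where the dual variable is supported, and the constructed $\cE_a$ respect the diagonal structure so that incoherent inputs are maximally penalized. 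The main obstacle is verifying that the instrument built from the dual solution genuinely lies in the constrained class (dephasing-covariant / incoherent / strictly incoherent) rather than being merely an abstract collection of CP maps — this is where the specific algebraic structure of each operation class, and the fact that $\sigma^*$ is diagonal, must be used carefully. I would also need to confirm strong duality holds (Slater's condition is easily checked since one may take $\sigma$ proportional to the maximally mixed state to get strict feasibility), so that the primal and dual optima coincide and the constructed instrument actually realizes the ratio $2^{C_{\max}(\rho)}$. Combining the two bounds yields the claimed equality for each of the three instrument types simultaneously.
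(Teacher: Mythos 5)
Your upper bound is exactly the paper's argument: apply the operator inequality $\rho\leq 2^{C_{\max}(\rho)}\sigma^*$ inside each subchannel and each POVM element, then optimize over POVMs and instruments. That direction is complete and correct.

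The lower bound, however, is only a plan, and the plan as stated omits the one construction the proof actually needs. The paper does not read the instrument off an SDP dual optimizer of $C_{\max}$ directly; it first invokes Theorem~1 to obtain a single free channel $\cE\in \mathrm{DIO}$ with $2^{C_{\max}(\rho)}=d\Tr{\cE(\rho)\proj{\Psi_+}}$ (this is where the duality you gesture at lives, via the correspondence $\tau=d\,\cE^\dag(\proj{\Psi_+})$ between dual feasible points $\tau\geq 0$, $\Delta(\tau)=\mathbb{I}$ and channels in DIO/IO/SIO), and then builds the instrument by phase randomization: with the diagonal unitaries $U_k=\sum_j e^{2\pi i jk/d}\proj{j}$ it sets $\cE_k=\frac{1}{d}U_k\cE(\cdot)U_k^\dag$, so that $\sum_k\cE_k$ is again a free channel and $\set{\cE_k}_k$ is an admissible instrument. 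The decisive computation you are missing is that for this instrument \emph{every} incoherent input $\sigma$ yields $p_{\text{succ}}=\frac{1}{d}\Tr{\cE(\sigma)\sum_k U_k^\dag M_k U_k}=\frac{1}{d}$ for \emph{every} POVM, because $\cE(\sigma)$ is diagonal and $\Delta\bigl(\sum_k U_k^\dag M_k U_k\bigr)=\mathbb{I}$, whereas the POVM $N_k=U_k\proj{\Psi_+}U_k^\dag$ gives $\rho$ success probability $\Tr{\cE(\rho)\proj{\Psi_+}}=2^{C_{\max}(\rho)}/d$, so the ratio $2^{C_{\max}(\rho)}$ is attained. Your sentence that ``the constructed $\cE_a$ respect the diagonal structure so that incoherent inputs are maximally penalized'' names the goal but supplies neither the construction nor the verification; it is precisely this step, not strong duality (which you do address and which is indeed easy), that carries the burden of the theorem. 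As written, the lower bound is not proved.
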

The proof of Theorem \ref{thm:op_sub} is presented in Supplemental Material \cite{supplement}. This result shows that the advantage of coherent states in
certain instances of subchannel discrimination problems can be exactly captured by $C_{\max}$, which provides another operational interpretation
of $C_{\max}$ and also shows the equivalence among DIO, IO and SIO in the information processing task of subchannel discrimination.

\textit{\textbf{Min-relative entropy of coherence $C_{\min}(\rho)$.}}--Given  two operators  $\rho$ and $\sigma$ with
$\rho\geq 0, \Tr{\rho}\leq 1$ and $\sigma\geq 0$,  max- and min- relative entropy of $\rho$ relative to $\sigma$ are defined as
\begin{eqnarray}
D_{\min}(\rho||\sigma):=-\log\Tr{\Pi_{\rho}\sigma}
\end{eqnarray}
where $\Pi_{\rho}$ denotes the projector onto $\text{supp} \rho$, the  support of $\rho$.
Corresponding to $C_{\max}(\rho)$ defined in (\ref{cmax}), we can similarly introduce a quantity defined by min-relative entropy,
\begin{eqnarray}
C_{\min}(\rho):=\min_{\sigma\in\cI}D_{\min}(\rho||\sigma).
\end{eqnarray}
Since $D_{\min}(\rho||\sigma)=0$ if $\text{supp}\,\rho=\text{supp}\,\sigma$ \cite{Datta2009IEEE},
we have
$\rho\in\cI~\Rightarrow ~C_{\min}(\rho)=0$. However, converse direction may not be true, for example,
let $\rho=\frac{1}{2}\proj{0}+\frac{1}{2}\proj{+}$ with $\ket{+}=
\frac{1}{\sqrt{2}}(\ket{1}+\ket{2})$, then $\rho$ is coherent but $C_{\min}(\rho)=0$.
Besides, as $D_{\min}$ is monotone under CPTP maps \cite{Datta2009IEEE},
we have  $C_{\min}(\Phi(\rho))\leq C_{\min}(\rho)$ for any
$\Phi \in IO$.
However, $C_{\min}$ may increase on average under IO (see Supplemental Material \cite{supplement}). Thus, $C_{\min}$ is not be a proper coherence measure as $C_{\max}$.

Although $C_{\min}$ is not a good coherence quantifier, it still has some interesting properties in the manipulation of coherence. First, $C_{\min}$ gives upper bound of the maximum overlap with the set of incoherent states for any given quantum state $\rho\in\cD(\cH)$,
\begin{eqnarray}\label{ineq:min_G}
2^{-C_{\min}(\rho)}\geq\max_{\sigma\in\cI}F(\rho,\sigma)^2.
\end{eqnarray}
Moreover, if $\rho$ is pure state $\ket{\psi}$, then above equality holds, that is,
\begin{eqnarray}
2^{-C_{\min}(\psi)}=\max_{\sigma\in\cI}F(\psi,\sigma)^2,
\end{eqnarray}
see proof in  Supplemental Material \cite{supplement}.

Moreover, for geometry of coherence defined by $C_g(\rho)=1- \max_{\sigma\in\cI}F(\rho,\sigma)^2$ \cite{Streltsov2015}, $C_{\min}$ also provides a lower bound for $C_g$ as follows
 \begin{eqnarray}
 C_g(\rho)\geq 1-2^{-C_{\min}(\rho)}.
 \end{eqnarray}

Now let us consider again the one-shot version of distillable coherence under MIO by
modifying and smoothing the min-relative entropy of coherence $C_{\min}$. We define the one-shot distillable coherence of a quantum state $\rho$ under MIO as
\begin{eqnarray*}
 C^{(1),\epsilon}_{D,MIO}(\rho):=\max_{\substack{
\mathcal{E}\in MIO\\
M\in \mathbb{Z}
}}\set{\log M: F(\mathcal{E}(\rho),\proj{\Psi^M_+})^2\geq1-\epsilon},
\end{eqnarray*}
where $\ket{\Psi^M_+}=\frac{1}{\sqrt{M}}\sum^M_{i=1}\ket{i}$ and $\epsilon>0$.

For any $\epsilon>0$, we define the smooth min-relative entropy of coherence of a quantum state $\rho$ as follows
\begin{eqnarray}
C^{\epsilon}_{\min}(\rho):=\max_{\substack{
0\leq A\leq \mathbb{I}\\
\Tr{A\rho}\geq 1-\epsilon}}
\min_{\sigma\in\cI}-\log\Tr{A\sigma},
\end{eqnarray}
where $\mathbb{I}$ denotes the
identity.
It can be shown that $C^{\epsilon}_{\min}$ is a upper bound of one-shot distillable coherence,
\begin{eqnarray}\label{ineq:min_dis}
  C^{(1),\epsilon}_{D,MIO}(\rho)\leq C^{\epsilon}_{\min}(\rho)
\end{eqnarray}
for any $\epsilon>0$, see proof in Supplemental Material \cite{supplement}.

The distillation of coherence in asymptotic limit can be expressed as
\begin{eqnarray*}
C_{D,MIO}=\lim_{\epsilon\to 0}\lim_{n\to\infty}\frac{1}{n}C^{(1),\epsilon}_{D,MIO}(\rho).
\end{eqnarray*}
It has been proven that $C_{D,MIO}(\rho)=C_r(\rho)$ \cite{Winter2016}. Here we show that the equality in inequality \eqref{ineq:min_dis} holds in the asymptotic limit as the $C_{\min}$ is equivalent to $C_r$ in the asymptotic limit.
Given a quantum state $\rho\in\cD(\cH)$, then
\begin{eqnarray}\label{eq:min_vs_r}
\lim_{\epsilon\to 0}\lim_{n\to \infty}
\frac{1}{n}C^{\epsilon}_{\min}(\rho^{\ot n})
=C_r(\rho).
\end{eqnarray}
(The proof is presented in Supplemental Material \cite{supplement}.)

We have shown that $C_{\min}$ gives rise to the bounds for maximum overlap with the incoherent states
and for one-shot distillable coherence. Indeed the exact expression of $C_{\min}$ for some special class of quantum states can be calculated.
For pure state $\ket{\psi}=\sum^d_{i=1}\psi_i\ket{i}$ with $\sum^d_{i=1}|\psi_i|^2=1$, we have
$C_{\min}(\psi)=-\log\max_{i}|\psi_i|^2$. For maximally coherent state $\ket{\Psi}=\frac{1}{\sqrt{d}}\sum^d_{j=1}e^{i\theta_j}\ket{j}$, we have
$C_{\min}(\Psi)=\log d$, which is the maximum value for $C_{\min}$ in d-dimensional space.

\textit{\textbf{Relationship between $C_{\max}$ and other coherence measures.}}--
First, we investigate the relationship among $C_{\max}$, $C_{\min}$ and $C_r$.
Since $D_{\min}(\rho||\sigma)\leq S(\rho||\sigma)\leq D_{\max}(\rho||\sigma)$  for any quantum states $\rho$ and $\sigma$ \cite{Datta2009IEEE},
one has
\begin{eqnarray}
C_{\min}(\rho)\leq C_r(\rho)\leq C_{\max}(\rho).
\end{eqnarray}
Moreover, as mentioned before,
these quantities are all equal in the asymptotic limit.

Above all, $C_{\max}$ is equal to the logarithm of robustness of coherence, as
$RoC(\rho)=\min_{\sigma\in\cI}\set{s\geq0|\rho\leq(1+s)\sigma}$ and $C_{\max}(\rho)=\min_{\sigma\in \cI}\min\set{\lambda:\rho\leq2^\lambda \sigma}$ \cite{Chitambar2016b}, that is,  $2^{C_{\max}(\rho)}=1+RoC(\rho)$.
Thus, the operational interpretations of $C_{\max}$ in terms of maximum overlap with maximally coherent states and
subchannel discrimination, can also be viewed as the operational interpretations of robustness of coherence $RoC$.
It is known that robustness of coherence plays an important role in a phase discrimination task, which provides an
operational interpretation for robustness of coherence \cite{Napoli2016}. This phase discrimination task
 investigated in \cite{Napoli2016} is just
a special case of the subchannel discrimination in depasing-covariant  instruments. Due to the relationship between $C_{\max}$ and $RoC$, we can obtain the closed form of
$C_{\max}$ for some special class of quantum states.
As an example, let us consider a pure state $\ket{\psi}=\sum^d_{i=1}\psi_i\ket{i}$. Then $C_{\max}(\psi)=\log((\sum^d_{i=1}|\psi_i|)^2))=2\log(\sum^d_{i=1}|\psi_i|)$.
Thus, for maximally coherent state $\ket{\Psi}=\frac{1}{\sqrt{d}}\sum^d_{j=1}e^{i\theta_j}\ket{j}$, we have
$C_{\max}(\Psi)=\log d$, which is the maximum value for $C_{\max}$ in d-dimensional space.

Since $RoC(\rho)\leq C_{l_1}(\rho)$ \cite{Napoli2016} and $1+RoC(\rho)=2^{C_{\max}(\rho)}$, then
$C_{\max}(\rho)\leq \log(1+C_{l_1}(\rho))$. We have the relationship among these coherence measures,
\begin{eqnarray*}
C_{\min}(\rho)\leq C_r(\rho)\leq C_{\max}(\rho)
 &=&\log(1+RoC(\rho))\\
&\leq& \log(1+C_{l_1}(\rho)),~~~~
\end{eqnarray*}
which implies that $2^{C_r(\rho)}\leq 1+C_{l_1}(\rho)$ (See also \cite{Rana2016}).

\section{Conclusion}\label{sec:con}
We have investigated the properties of max- and min-relative entropy of coherence, especially the operational interpretation of the max-relative entropy of coherence.
It has been found that the max-relative entropy of coherence characterizes the
maximum overlap with the maximally coherent states under DIO, IO and SIO, as
well as the maximum advantage achievable by coherent states compared with all incoherent states in
subchannel discrimination problems of all dephasing-covariant, incoherent and strictly incoherent instruments, which
also provides new operational interpretations of robustness of coherence and illustrates the equivalence of DIO, IO and SIO in these two
operational taks. The study of $C_{\max}$ and $C_{\min}$ also makes the relationship between the operational
coherence measures (e.g. $C_r$ and $C_{l_1}$ ) more clear.
These results may highlight the understanding to the operational resource theory of coherence.

Besides, the relationships among smooth max- and min- relative relative entropy of coherence
and one-shot coherence cost and distillation have been investigated explicitly. As
both smooth max- and min- relative entropy of coherence
are equal to relative entropy of coherence in the asymptotic limit and the significance of relative entropy of coherence in the distillation of coherence, further studies are desired on the one-shot coherence cost
and distillation.

\begin{acknowledgments}
This work is supported by the Natural Science Foundation of China (Grants No. 11171301, No. 10771191, No. 11571307 and No. 11675113) and the Doctoral Programs Foundation of the Ministry of Education of China (Grant No. J20130061).
\end{acknowledgments}

\bibliographystyle{apsrev4-1}
 \bibliography{Maxcoh-lit}

\appendix
\section{strong monotonicity under IO for $C_{\max}$ }\label{apen:max_mon}

We prove this property based on the method in  \cite{Vedral1998} and the basic facts of $D_{\max}$ \cite{Datta2009IEEE}. Due to the definition of $C_{\max}$, there exists an optimal $\sigma_*\in\cI$ such that $C_{\max}(\rho)=D_{\max}(\rho||\sigma_*)$. Let $\tilde{\sigma}_i=K_i\sigma_* K^\dag_i/\Tr{K_i\sigma_* K^\dag_i}$, then we have
\begin{eqnarray*}
&&\sum_ip_i D_{\max}(\tilde{\rho}_i||\tilde{\sigma}_i)\\
&\leq& \sum_i D_{\max}(K_i\rho K^\dag_i||K_i\sigma_* K^\dag_i)\\
&\leq& \sum_i D_{\max}(\Ptr{E}{\mathbb{I}\ot \proj{i}U\rho\ot\proj{\alpha}U^\dag \mathbb{I}\ot \proj{i}}||\\
&&\times\Ptr{E}{\mathbb{I}\ot \proj{i}U\sigma_*\ot\proj{\alpha}U^\dag \mathbb{I}\ot \proj{i}})\\
&\leq&\sum_i D_{\max}(\mathbb{I}\ot \proj{i}U\rho\ot\proj{\alpha}U^\dag \mathbb{I}\ot \proj{i}||\\
&&\times \mathbb{I}\ot \proj{i}U\sigma_*\ot\proj{\alpha}U^\dag \mathbb{I}\ot \proj{i})\\
&=& D_{\max}(U\rho\ot\proj{\alpha}U^\dag||U\sigma_*\ot\proj{\alpha}U^\dag)\\
&=&D_{\max}(\rho\ot\proj{\alpha}||\sigma_*\ot\proj{\alpha})\\
&=&D_{\max}(\rho||\sigma_*)\\
&=&C_{\max}(\rho),
\end{eqnarray*}
where the first inequality comes from the proof of Theorem 1 in \cite{Datta2009IEEE}, the second inequality comes from the fact that there exists an extended Hilbert space $\cH_E$, a pure $\ket{\alpha}\in\cH_E$ and a
global unitary $U$ on $\cH\ot\cH_E$ such that $\Ptr{E}{\mathbb{I}\ot \proj{i}U\rho\ot\proj{\alpha}U^\dag \mathbb{I}\ot \proj{i}}=K_i\rho K^\dag_i$ \cite{Vedral1998}, the third
inequality comes from the fact that $D_{\max}$ is monotone under partial trace \cite{Datta2009IEEE}, the last inequality
comes from the fact that for any set of mutually orthogonal projectors $\set{P_k}$,
$D_{\max}(\sum_kP_k\rho_1 P_k||\sum_kP_k\rho_2 P_k)=\sum_kD_{\max}(P_k\rho_1 P_k||P_k\rho_2 P_k)$ \cite{Datta2009IEEE}
 and the first equality comes
from the fact that $D_{\max}$ is invariant under unitary operation and
$D_{\max}(\rho_1\ot P||\rho_2\ot P)=D_{\max}(\rho_1||\rho_2)$ for any projector $P$ \cite{Datta2009IEEE}.
Besides, since $C_{\max}(\tilde{\rho}_i)=\min_{\tau\in\cI}D_{\max}(\tilde{\rho}_i||\tau)\leq D_{\max}(\tilde{\rho}_i||\tilde{\sigma}_i)$, we have $\sum_i p_iC_{\max}(\tilde{\rho}_i)\leq C_{\max}(\rho)$.

\section{Coherence measure induced from $C_{\max}$}\label{apen:max_con}
Here we introduce a proper coherence measure from
$C_{\max}$ by the method of convex roof and prove that it satisfies all the
conditions (including convexity) a coherence measure need to fulfil.
We define the convex roof of $C_{\max}$ as
follows
\begin{eqnarray}
\tilde{C}_{\max}(\rho)
=\min_{\rho=\sum\lambda_i\proj{\psi_i}}
\sum_i\lambda_iC_{\max}(\psi_i),
\end{eqnarray}
where the minimum is taken over all the
pure state decompositions of state $\rho$.
Due to the definition of $\tilde{C}_{\max}$
and the properties of $C_{\max}$, the
positivity and convexity of $\tilde{C}_{\max}$
are obvious.
We only need to prove that it is nonincreasing on average under IO.

\begin{prop}
Given a quantum state $\rho\in\cD(\cH)$,
for any incoherent operation $\Phi(\cdot)=\sum_{\mu} K_{\mu}(\cdot)K^\dag_{\mu}$
with $K_{\mu}\cI K^\dag_{\mu}\subset\cI$,
\begin{eqnarray}
\sum_{\mu} p_{\mu}\tilde{C}_{\max}(\tilde{\rho}_{\mu})\leq \tilde{C}_{\max}(\rho),
\end{eqnarray}
where $p_{\mu}=\Tr{K_{\mu}\rho K^\dag_{\mu}}$ and $\tilde{\rho}_{\mu}=K_{\mu}\rho K^\dag_{\mu}/p_{\mu}$.
\end{prop}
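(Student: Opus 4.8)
The plan is to reduce this mixed-state statement to the strong monotonicity of $C_{\max}$ on \emph{pure} states, which has already been established in Appendix \ref{apen:max_mon}. The bridge between the two is the defining feature of the convex roof: since $\tilde{C}_{\max}$ is obtained by minimizing $\sum_i\lambda_iC_{\max}(\psi_i)$ over all pure-state decompositions $\rho=\sum_i\lambda_i\proj{\psi_i}$, any \emph{particular} decomposition furnishes an upper bound on $\tilde{C}_{\max}$. The key structural observation is that each incoherent Kraus operator $K_\mu$ maps a pure state to a (subnormalized) pure state, so applying $\Phi$ to a pure decomposition of $\rho$ produces explicit pure decompositions of the post-measurement states $\tilde{\rho}_\mu$ that I can feed into this upper bound.

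First I would fix an \emph{optimal} pure-state decomposition of the input, $\rho=\sum_i\lambda_i\proj{\psi_i}$, so that $\tilde{C}_{\max}(\rho)=\sum_i\lambda_iC_{\max}(\psi_i)$. Applying $K_\mu$ to each term gives $K_\mu\proj{\psi_i}K^\dag_\mu=p_{\mu|i}\proj{\psi_{\mu i}}$, where $p_{\mu|i}=\Tr{K_\mu\proj{\psi_i}K^\dag_\mu}$ and $\ket{\psi_{\mu i}}=K_\mu\ket{\psi_i}/\sqrt{p_{\mu|i}}$. Summing over $i$ yields
\[
\tilde{\rho}_\mu=\frac{1}{p_\mu}\sum_i\lambda_ip_{\mu|i}\proj{\psi_{\mu i}},
\]
which is a valid convex decomposition because $p_\mu=\Tr{K_\mu\rho K^\dag_\mu}=\sum_i\lambda_ip_{\mu|i}$, so the weights $\lambda_ip_{\mu|i}/p_\mu$ sum to one. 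Since this is \emph{one} admissible pure decomposition of $\tilde{\rho}_\mu$, the convex-roof definition immediately gives
\[
\tilde{C}_{\max}(\tilde{\rho}_\mu)\leq\frac{1}{p_\mu}\sum_i\lambda_ip_{\mu|i}C_{\max}(\psi_{\mu i}).
\]

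Multiplying by $p_\mu$, summing over $\mu$, and exchanging the order of the two sums, the $p_\mu$ factors cancel and I am left with $\sum_i\lambda_i\sum_\mu p_{\mu|i}C_{\max}(\psi_{\mu i})$. At this point the inner sum is exactly the average value of $C_{\max}$ over the incoherent branches acting on the \emph{pure} input $\ket{\psi_i}$, so the strong monotonicity of $C_{\max}$ from Appendix \ref{apen:max_mon} applies termwise and bounds it by $C_{\max}(\psi_i)$. Substituting back gives $\sum_\mu p_\mu\tilde{C}_{\max}(\tilde{\rho}_\mu)\leq\sum_i\lambda_iC_{\max}(\psi_i)=\tilde{C}_{\max}(\rho)$, as desired.

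I do not anticipate a deep obstacle; this is the standard convex-roof lifting argument, and the genuine analytic content lives entirely in the already-proven pure-state strong monotonicity. The only points that require care are the bookkeeping of the normalization factors $p_{\mu|i}$ and $p_\mu$ (in particular verifying $p_\mu=\sum_i\lambda_ip_{\mu|i}$ so that the displayed decomposition of $\tilde{\rho}_\mu$ really is convex), and confirming that the pure-state monotonicity of Appendix \ref{apen:max_mon} may be invoked termwise — which is legitimate because that result holds for arbitrary input states and hence in particular for each $\ket{\psi_i}$.
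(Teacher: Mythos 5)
Your proposal is correct, and the convex-roof lifting itself (fixing an optimal decomposition $\rho=\sum_i\lambda_i\proj{\psi_i}$, pushing it through each $K_\mu$ to get an admissible decomposition of $\tilde{\rho}_\mu$, and exchanging the sums over $i$ and $\mu$) is exactly what the paper does. Where you diverge is in the final inequality $\sum_\mu p_{\mu|i}\,C_{\max}(\psi_{\mu i})\leq C_{\max}(\psi_i)$: you obtain it by invoking the already-proven strong monotonicity of $C_{\max}$ under IO termwise on each pure state $\ket{\psi_i}$ (legitimate, since that result holds for arbitrary input states and incoherent Kraus operators map pure states to subnormalized pure states), whereas the paper instead rewrites $C_{\max}(\psi)=\log(1+C_{l_1}(\psi))$ for pure states, applies concavity of the logarithm to pull the average inside, and then uses the strong monotonicity of $C_{l_1}$ under IO. Your route is shorter and more self-contained, relying only on the paper's own Appendix A rather than on the pure-state identity linking $C_{\max}$ to the robustness/$l_1$-norm of coherence and on the external monotonicity of $C_{l_1}$; the paper's route has the mild advantage of not needing the full strength of the strong monotonicity of $C_{\max}$ at this point, only properties of $C_{l_1}$. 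The bookkeeping you flag ($p_\mu=\sum_i\lambda_i p_{\mu|i}$, so the weights in the displayed decomposition of $\tilde{\rho}_\mu$ are indeed convex) is correct and matches the paper's computation.
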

\begin{proof}
Due to the definition of $\tilde{C}_{\max}(\rho)$, there exists a pure state decomposition of state $\rho=\sum_j\lambda_j\proj{\psi_j}$
such that $\tilde{C}_{\max}(\rho)=\sum_j\lambda_jC_{\max}(\psi_j)$. Then
\begin{eqnarray*}
\tilde{\rho}_{\mu}&=&\frac{K_{\mu}\rho K^\dag_{\mu}}{p_{\mu}}\\
&=&\sum_{j}\frac{\lambda_j}{p_u}K_{\mu}\proj{\psi_j}K^\dag_{\mu}\\
&=&\sum_{j}\frac{\lambda_jq^{(\mu)}_j}{p_{\mu}}\proj{\phi^{(\mu)}_j},
\end{eqnarray*}
where $\ket{\phi^{(\mu)}_j}=K_{\mu}\ket{\psi_j}/\sqrt{q^{(\mu)}_j}$ and
 $q^{(\mu)}_j=\Tr{K_{\mu}\proj{\psi_j}K^\dag_{\mu}}$.
 Thus, $\tilde{C}_{\max}(\tilde{\rho}_{\mu})
 \leq\sum_j\frac{\lambda_jq^{(\mu)}_j}{p_{\mu}}C_{\max}(\phi^{(\mu)}_j)$
 and
 \begin{eqnarray*}
 \sum_{\mu}p_{\mu}\tilde{C}_{\max}(\tilde{\rho}_{\mu})
 &\leq&\sum_{j,\mu}\lambda_jq^{(\mu)}_jC_{\max}(\phi^{(\mu)}_j)\\
 &=&\sum_j\lambda_j\sum_{\mu}q^{(\mu)}_jC_{\max}(\phi^{(\mu)}_j)\\
 &=&\sum_j\lambda_j\sum_{\mu}q^{(\mu)}_j\log(1+C_{l_1}(\phi^{(\mu)}_j))\\
 &\leq& \sum_j\lambda_j\log(1+\sum_{\mu}q^{(\mu)}_j C_{l_1}(\phi^{(\mu)}_j))\\
 &\leq&\sum_j\lambda_j\log(1+C_{l_1}(\psi_j))\\
 &=&\sum_j\lambda_jC_{\max}(\psi_j)\\
 &=&\tilde{C}_{\max}(\rho),
  \end{eqnarray*}
where the third line comes from the fact that for pure state $\psi$, $C_{\max}(\psi)=\log(1+C_{l_1}(\psi))$, the
forth line comes from the concavity of logarithm and the fifth lines comes from the fact that monotonicity of $C_{l_1}$ under
IO as $\Phi(\psi_j)=\sum_{\mu}K_{\mu}\proj{\psi_j}K^\dag_{\mu}=\sum_{\mu}q^{(\mu)}_j\proj{\phi^{(\mu)}_j}$.
\end{proof}

\section{The operational interpretation of $C_{\max}$}\label{apen:max_fid}
To prove the results, we need some preparation.
First of all,
Semidefinite programming (SDP) is a powerful tool in this work---which is a generalization of linear programming problems \cite{semidefinite1996}. A SDP over $\mathcal{X}=\mathbb{C}^N$ and $\mathcal{Y}=\mathbb{C}^M$ is a triple ($\Phi$, $C$, $D$), where $\Phi$ is a Hermiticity-preserving map from $\mathcal{L(X)}$ (linear operators on $\mathcal{X}$) to $\mathcal{L(Y)}$ (linear operators on $\mathcal{Y}$), $C\in$ Herm($\mathcal{X}$) (Hermitian operators over $\mathcal{X}$), and $D\in$ Herm($\mathcal{Y})$ (Hermitian operators over $\mathcal{Y}$).
There is a pair of optimization problems associated with every SDP ($\Phi$, $C$, $D$), known as the primal and the dual problems. The standard form of an SDP (that is typically followed for general conic programming) is \cite{watrous2009}
\begin{equation}
\begin{matrix}
\textrm{\underline{Primal problem}} \textrm{                          } &  \textrm{\underline{Dual problem}} \vspace{2mm}\\
\textrm{minimize: } \langle C,X \rangle, \textrm{                      } & \textrm{maximize: } \langle D,Y \rangle, \\
\textrm{subject to: } \Phi(X) \geq D, \textrm{                      } & \textrm{subject to: } \Phi^*(Y) \geq C, \\
X \in Pos(\mathcal{X}). & Y \in Pos(\mathcal{Y}).
\end{matrix}
\end{equation}
SDP forms have interesting and ubiquitous applications in quantum information theory. For example, it was recently shown by Brandao \textit{et. al} \cite{brandao2016quantum} that there exists a quantum algorithm for solving SDPs that gives an unconditional square-root speedup over any existing classical method.\\

\begin{lem}\label{lem:1}
Given a quantum state $\rho\in\cD(\cH)$,
\begin{eqnarray}
\min_{\substack{
\sigma\geq0\\
\Delta(\sigma)\geq \rho
}}\Tr{\sigma}=\max_{\substack{
\tau\geq0\\
\Delta(\tau)=\mathbb{I}
}}\Tr{\rho\tau}.
\end{eqnarray}
\end{lem}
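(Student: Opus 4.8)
The plan is to read the claimed identity as the strong-duality relation of a semidefinite program, using the SDP machinery just introduced. First I would cast the left-hand side as the primal problem: with cost operator $C=\mathbb{I}$, constraint operator $D=\rho$, and the (Hermiticity-preserving) dephasing map $\Phi=\Delta$, it reads \emph{minimize} $\Tr{\sigma}$ \emph{subject to} $\Delta(\sigma)\geq\rho$ and $\sigma\geq 0$. The structural fact that drives everything is that $\Delta$ is self-adjoint for the Hilbert--Schmidt inner product, $\Delta^\ast=\Delta$, since
\[
\inner{\Delta(A)}{B}=\sum_i\overline{\Innerm{i}{A}{i}}\,\Innerm{i}{B}{i}=\inner{A}{\Delta(B)} .
\]

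Next I would write down the dual program. Forming the Lagrangian and eliminating $\sigma$, the dual is \emph{maximize} $\Tr{\rho\tau}$ \emph{subject to} $\Delta(\tau)\leq\mathbb{I}$ and $\tau\geq 0$. Weak duality is immediate: for any primal-feasible $\sigma$ and dual-feasible $\tau$,
\[
\Tr{\sigma}=\inner{\mathbb{I}}{\sigma}\geq\inner{\Delta(\tau)}{\sigma}=\inner{\tau}{\Delta(\sigma)}\geq\inner{\tau}{\rho}=\Tr{\rho\tau},
\]
where the first inequality uses $\mathbb{I}\geq\Delta(\tau)$ with $\sigma\geq 0$, the middle equality uses $\Delta^\ast=\Delta$, and the last uses $\Delta(\sigma)\geq\rho$ with $\tau\geq 0$. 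To promote this to an equality I would check Slater's condition: both programs are strictly feasible---take $\sigma=2\mathbb{I}$ (so $\Delta(\sigma)=2\mathbb{I}>\rho$, as a density operator has eigenvalues at most one) for the primal and $\tau=\frac{1}{2}\mathbb{I}$ (so $\Delta(\tau)=\frac{1}{2}\mathbb{I}<\mathbb{I}$) for the dual---so strong duality holds and both optimal values are attained and equal.

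Finally I would reconcile the dual constraint $\Delta(\tau)\leq\mathbb{I}$ with the equality constraint $\Delta(\tau)=\mathbb{I}$ appearing in the statement, and this is where positivity of $\rho$ is used. Given any dual-feasible $\tau$, set $\tau'=\tau+(\mathbb{I}-\Delta(\tau))$; the added term is diagonal with entries $1-\Innerm{i}{\tau}{i}\geq 0$, hence positive, so $\tau'\geq 0$ and $\Delta(\tau')=\mathbb{I}$, while
\[
\Tr{\rho\tau'}-\Tr{\rho\tau}=\sum_i\bigl(1-\Innerm{i}{\tau}{i}\bigr)\Innerm{i}{\rho}{i}\geq 0 .
\]
Thus the maximum over $\Delta(\tau)\leq\mathbb{I}$ is already achieved on the slice $\Delta(\tau)=\mathbb{I}$, identifying the dual optimum with the right-hand side and completing the argument. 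I expect the main obstacle to be precisely this last reconciliation---confirming that relaxing the equality to an inequality does not enlarge the optimum---together with verifying Slater's condition so that strong, rather than merely weak, duality is available.
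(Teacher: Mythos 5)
Your proof is correct and follows essentially the same route as the paper's: cast the left-hand side as an SDP with $\Phi=\Delta$ (self-adjoint), dualize to the relaxed problem with constraint $\Delta(\tau)\leq\mathbb{I}$, invoke Slater's condition for strong duality, and then show the relaxation is tight by replacing $\tau$ with $\tau+\mathbb{I}-\Delta(\tau)$. Your write-up is in fact slightly more careful than the paper's (you verify weak duality explicitly and check strict feasibility on both sides, where the paper only exhibits one strictly feasible point), but the argument is the same.
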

\begin{proof}
First, we prove that
\begin{eqnarray}
\max_{\substack{
\tau\geq0\\
\Delta(\tau)=\mathbb{I}
}}\Tr{\rho\tau}
=\max_{\substack{
\tau\geq0\\
\Delta(\tau)\leq \mathbb{I}
}}\Tr{\rho\tau}.
\end{eqnarray}
For any positive operator $\tau\geq0$ with $\Delta(\tau)\leq \mathbb{I}$, define $\tau'=\tau+\mathbb{I}-\Delta(\tau)\geq0$, then  $\Delta(\tau)=\mathbb{I}$ and
$\Tr{\rho\tau'}\geq \Tr{\rho\tau}$. Thus we obtain the above equation.

Now, we prove that
\begin{eqnarray}\label{eq:SDP1}
\min_{\substack{
\sigma\geq0\\
\Delta(\sigma)\geq \rho
}}\Tr{\sigma}=\max_{\substack{
\tau\geq0\\
\Delta(\tau)\leq \mathbb{I}
}}\Tr{\rho\tau}.
\end{eqnarray}
The left side of equation \eqref{eq:SDP1} can be expressed as the following semidefinite
programming (SDP)
\begin{eqnarray*}
\min \Tr{B\sigma},\\
\text{s.t.}~~ \Lambda(\sigma)\geq C,\\
\sigma\geq 0,
\end{eqnarray*}
where $B=\mathbb{I}$, $C=\rho$ and $\Lambda=\Delta$. Then the dual SDP is given by
\begin{eqnarray*}
\max \Tr{C\tau},\\
\text{s.t.}~~ \Lambda^\dag(\tau)\leq B,\\
\tau\geq 0.
\end{eqnarray*}
That is,
\begin{eqnarray*}
\max \Tr{\rho\tau},\\
\text{s.t.}~~ \Delta(\tau)\leq \mathbb{I},\\
\tau\geq 0.
\end{eqnarray*}
Note that the dual is strictly feasible as we only need to choose $\sigma=2\lambda_{\max}(\rho) \mathbb{I}$, where $\lambda_{\max}(\rho)$ is the
 maximum eigenvalue of $\rho$. Thus, strong duality holds, and the equation \eqref{eq:SDP1} is proved.

\end{proof}

\begin{lem}\label{lem:2}
For maximally coherent state $\ket{\Psi_+}=\frac{1}{\sqrt{d}}\sum^d_{i=1}\ket{i}$, we have the following facts,

(i) For any $\mathcal{E}\in DIO$, $\tau=d\mathcal{E}^\dag(\proj{\Psi_+})$ satisfies $\tau\geq 0$ and $\Delta(\tau)=\mathbb{I}$.

(ii) For any operator $\tau\geq 0$ with $\Delta(\tau)=\mathbb{I}$, there exists a quantum operation $\mathcal{E}\in DIO$ such that $\tau=d\mathcal{E}^\dag(\proj{\Psi_+})$.

(iii) For any $\mathcal{E}\in IO$, $\tau=d\mathcal{E}^\dag(\proj{\Psi_+})$ satisfies $\tau\geq 0$ and $\Delta(\tau)=\mathbb{I}$.

(iv) For any operator $\tau\geq 0$ with $\Delta(\tau)=\mathbb{I}$, there exists a quantum operation $\mathcal{E}\in IO$ such that $\tau=d\mathcal{E}^\dag(\proj{\Psi_+})$.

(v) For any $\mathcal{E}\in SIO$, $\tau=d\mathcal{E}^\dag(\proj{\Psi_+})$ satisfies $\tau\geq 0$ and $\Delta(\tau)=\mathbb{I}$.

(vi) For any operator $\tau\geq 0$ with $\Delta(\tau)=\mathbb{I}$, there exists a quantum operation $\mathcal{E}\in SIO$ such that $\tau=d\mathcal{E}^\dag(\proj{\Psi_+})$.

\end{lem}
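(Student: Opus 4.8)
The plan is to exploit the Choi–Jamio{\l}kowski correspondence between the adjoint map $\mathcal{E}^\dag$ and its Choi operator, together with the structure that the three classes DIO, IO, SIO impose on their Kraus operators. The central observation is that $d\,\mathcal{E}^\dag(\proj{\Psi_+})$ is essentially a rescaled Choi matrix of $\mathcal{E}$, so conditions on $\mathcal{E}$ translate directly into conditions on $\tau$. Concretely, writing $\mathcal{E}(\cdot)=\sum_\mu K_\mu(\cdot)K^\dag_\mu$, one computes $d\,\mathcal{E}^\dag(\proj{\Psi_+})=\sum_\mu K^\dag_\mu\bigl(\sum_{i,j}\out{i}{j}\bigr)K_\mu=\sum_\mu K^\dag_\mu J K_\mu$, where $J=\sum_{i,j}\out{i}{j}$ is the (unnormalized) all-ones matrix. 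Positivity of $\tau$ is then immediate from $J\geq 0$. For the diagonal condition, I would note that $\bra{k}\tau\ket{k}=\sum_\mu\bra{k}K^\dag_\mu J K_\mu\ket{k}=\sum_\mu\abs{\sum_i\iinner{i}{K_\mu|k}}^2$, and relate $\Delta(\tau)=\mathbb{I}$ to the trace-preservation of $\mathcal{E}$ applied through the dephasing structure.

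For parts (i), (iii), (v), the forward directions, I would argue uniformly using trace preservation: $\Delta(\tau)=\mathbb{I}$ should follow from $\mathcal{E}$ being trace preserving \emph{together with} the covariance/incoherence constraint on $\mathcal{E}$ that forces the off-basis contributions to cancel on the diagonal. The cleanest route is to check the matrix elements $\bra{k}\tau\ket{k}$ and show each equals $1$. In the DIO case one uses $[\Delta,\mathcal{E}]=0$, which dualizes to $[\Delta,\mathcal{E}^\dag]=0$ since $\Delta$ is self-adjoint, giving $\Delta(\tau)=d\,\Delta\mathcal{E}^\dag(\proj{\Psi_+})=d\,\mathcal{E}^\dag(\Delta(\proj{\Psi_+}))=d\,\mathcal{E}^\dag(\tfrac1d\mathbb{I})=\mathcal{E}^\dag(\mathbb{I})=\mathbb{I}$, where the last step is trace preservation of $\mathcal{E}$. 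For IO and SIO I would use that these are subclasses of DIO for the relevant dephasing identity, or verify the Kraus-level computation directly using $K_\mu\cI K^\dag_\mu\subset\cI$.

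For the converse directions (ii), (iv), (vi), I would construct an explicit $\mathcal{E}$ from a given $\tau\geq 0$ with $\Delta(\tau)=\mathbb{I}$. The natural candidate is to treat $\tau/d$ as (part of) a Choi operator and build Kraus operators from its spectral or Cholesky-type decomposition; the constraint $\Delta(\tau)=\mathbb{I}$ is exactly what guarantees the resulting map is trace preserving, while the specific form of the decomposition can be arranged to land in the desired class. I expect the main obstacle to be the converse for the \emph{strictly incoherent} case (vi): SIO requires $\Delta(K_\mu\rho K^\dag_\mu)=K_\mu\Delta(\rho)K^\dag_\mu$ for all $\mu$, a stringent condition forcing each $K_\mu$ to have at most one nonzero entry per row and column (a scaled partial permutation structure), so the decomposition of $\tau$ must be chosen compatibly with this rigid Kraus form rather than taken arbitrarily. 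The key step will be showing that \emph{any} admissible $\tau$ admits such a structured decomposition; I would attempt this by decomposing $\tau$ along the $d$ "generalized-phase" translates of the maximally coherent state and verifying that $\Delta(\tau)=\mathbb{I}$ supplies precisely the normalization making each branch a valid strictly incoherent Kraus operator.
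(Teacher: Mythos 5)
Your forward directions are essentially sound, but with one slip: you propose handling IO by treating it as a subclass of DIO, and it is not --- the paper itself stresses that there is no inclusion relation between IO and DIO. So for part (iii) the commutation argument $\Delta\circ\mathcal{E}^\dag=\mathcal{E}^\dag\circ\Delta$ is unavailable, and you must fall back on the Kraus-level computation you mention as an alternative: writing $\bra{k}\tau\ket{k}=\sum_\mu\abs{\sum_i\bra{i}K_\mu\ket{k}}^2$, the incoherence condition $K_\mu\cI K^\dag_\mu\subset\cI$ forces each column of $K_\mu$ to have at most one nonzero entry, so the cross terms vanish and the sum collapses to $\sum_\mu\bra{k}K^\dag_\mu K_\mu\ket{k}=1$. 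That is exactly the paper's argument for (iii); your DIO argument for (i) coincides with the paper's, and (v) does follow from (i) since $SIO\subset DIO$.

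The genuine gap is in the converse directions (ii), (iv), (vi), which carry the real content of the lemma, and where your proposal stops at a plan. The missing idea is the following explicit construction. Normalize $\hat{\tau}=\tau/d$ (so $\Delta(\hat{\tau})=\tfrac{1}{d}\mathbb{I}$), take the spectral decomposition $\hat{\tau}=\sum_i\lambda_i\proj{\psi_i}$ with $\ket{\psi_i}=\sum_j c^{(i)}_j\ket{j}$, and associate to each eigenvector the \emph{diagonal} operator $K^{(i)}=\sum_j c^{(i)}_j\proj{j}$, which satisfies $K^{(i)}\ket{\Psi_+}=\tfrac{1}{\sqrt{d}}\ket{\psi_i}$. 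Taking $d$ copies of each, weighted by $\sqrt{\lambda_i}$, the completeness relation $\sum M^\dag M=\mathbb{I}$ reduces to $\sum_i\lambda_i\abs{c^{(i)}_j}^2=\tfrac{1}{d}$ for every $j$, which is precisely the hypothesis $\Delta(\hat{\tau})=\tfrac{1}{d}\mathbb{I}$; and one checks $\mathcal{E}^\dag(\proj{\Psi_+})=\hat{\tau}$. Because every Kraus operator is diagonal, this single map lies in $SIO\subset IO\cap DIO$, so one construction disposes of (ii), (iv) and (vi) simultaneously. This also dissolves the obstacle you flag for the SIO converse: there is no need to hunt for a partial-permutation-compatible decomposition of $\tau$, since diagonal Kraus operators are automatically strictly incoherent. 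By contrast, your two concrete suggestions do not obviously close the argument: a generic spectral or Cholesky factorization of a Choi-type operator will not produce Kraus operators in any of the three classes (and $\tau$ only determines a compression of the Choi matrix, not the map), while the decomposition along phase translates of $\ket{\Psi_+}$ is left entirely undeveloped.
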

\begin{proof}
(i) Since $\mathcal{E}$ is a CPTP map, $\mathcal{E}^\dag$ is unital. Besides, as $\mathcal{E}\in DIO$, $[\mathcal{E},\Delta]=0$ implies that $[\mathcal{E}^\dag, \Delta]=0$. Thus $\Delta(\tau)=d\mathcal{E}^\dag(\Delta(\proj{\Psi_+}))=\mathcal{E}^\dag(\mathbb{I})=\mathbb{I}$.

(ii) For any positive operator $\tau\geq 0$ with $\Delta(\tau)=\mathbb{I}$,
$\Tr{\tau}=d$, thus $\tau=d\hat{\tau}$ with $\hat{\tau}\in\cD(\cH)$ and $\Delta(\hat{\tau})=\frac{1}{d}\mathbb{I}$.
Consider the spectral decomposition of  $\hat{\tau}=\sum^d_{i=1}\lambda_i\proj{\psi_i}$ with $\sum^d_{i=1}\lambda_i=1, \lambda_i\geq 0$ for any $i\in \set{1,.., d}$.
Besides,  for any $i\in \set{1,..., d}$, $\ket{\psi_i}$ can be written as
$\ket{\psi_i}=\sum^d_{j=1}c^{(i)}_j\ket{j}$ with $\sum^d_{j=1}|c^{(i)}_j|^2=1$.
Let us define $K^{(i)}_n=\sum^d_{j=1}c^{(i)}_j\proj{j}$ for any $n\in \set{1,...,d}$, then $K^{(i)}_n\ket{\Psi_+}=\frac{1}{\sqrt{d}}\ket{\psi_i}$ and
$\sum^d_{n=1}K^{(i)}_n\proj{\Psi_+}K^{(i)\dag}_n=\proj{\psi_i}$.
Let $M_{i,n}=\sqrt{\lambda_i}K^{(i)\dag}_n$,
then
\begin{eqnarray*}
\sum_{i,n}M^\dag_{i,n}M_{i,n}&=&\sum_{i,n}\lambda_iK^{(i)}_nK^{(i)\dag}_n\\
&=&d\sum^d_{i=1}\lambda_iK^{(i)}_1K^{(i)\dag}_1\\
&=&d\sum^d_{i=1}\lambda_i\sum^d_{j=1}|c^{(i)}_j|^2\proj{j}\\
&=&d\sum^d_{j=1}\sum^d_{i=1}\lambda_i|c^{(i)}_j|^2\proj{j}\\
&=&d\sum^d_{j=1}\frac{1}{d}\proj{j}=\mathbb{I},
\end{eqnarray*}
 where $\sum^d_{i=1}\lambda_i|c^{(i)}_j|^2=\sum_i\lambda_i|\iinner{\psi_i}{j}|^2
 =\Innerm{j}{\hat{\tau}}{j}=\frac{1}{d}$.
Then $\mathcal{E}(\cdot)=\sum_{i,n}M_{i,n}(\cdot) M^\dag_{i,n}$ is a CPTP map. Since $M_{i,n}$ is diagonal,
the quantum operation $\mathcal{E}(\cdot)=\sum_{i,n}M_{i,n}(\cdot) M^\dag_{i,n}$ is a DIO.
Moreover, $\mathcal{E}^\dag(\proj{\Psi_+})=\sum_{i,n}M^\dag_{i,n}\proj{\Psi_+}M_{i,n}
=\sum_{i,n}\lambda_iK^{(i)}_n\proj{\Psi_+}K^{(i)\dag}_n=\sum_i\lambda_i\proj{\psi_i}=\hat{\tau}$.

(iii) If $\cE$ is an incoherent operation, then there exists a set of Kraus operators $\set{K_{\mu}}$ such that
$\cE(\cdot)=\sum_{\mu}K_{\mu}(\cdot) K^\dag_{\mu}$ and $K_{\mu}\cI K^{\dag}_{\mu}\in\cI$. Thus
\begin{eqnarray*}
d\Innerm{i}{\cE^{\dag}(\proj{\Psi_+})}{i}
&=&d\sum_{\mu}\Innerm{i}{ K^\dag_{\mu}\proj{\Psi_+}K_{\mu}}{i}\\
&=&\sum_{\mu}\sum_{m,n}\Innerm{i}{K^{\dag}_{\mu}}{m}\!\Innerm{n}{K_{\mu}}{i}\\
&=&\sum_{\mu}\sum_{m}\Innerm{i}{K^{\dag}_{\mu}}{m}\!\Innerm{m}{K_{\mu}}{i}\\
&=&\sum_{\mu}\Innerm{i}{K^{\dag}_{\mu}K_{\mu}}{i}
=1,
\end{eqnarray*}
where the third line comes from the fact that
for any $K_{\mu}$, there exists at most one nonzero term in each  column which implies
that $\bra{i}K^\dag_{\mu}\ket{m}\bra{n}K_{\mu}\ket{i}\neq 0$ only if $m=n$, and the forth line
comes from the fact that $\sum_{\mu}K^\dag_{\mu}K_{\mu}=\mathbb{I}$. Therefore,
$\Delta(d\cE^\dag(\proj{\Psi_+}))=\mathbb{I}$.

(iv) This is obvious, as the DIO $\cE$ given in (ii) is also an incoherent operation.

(v) This is obvious as $SIO\subset DIO$.

(vi) This is obvious as the DIO $\cE$ given in (ii) also belongs to $SIO$.

\end{proof}

\begin{lem}\label{lem:3}
Given a quantum state $\rho\in\cD(\cH)$, one has
\begin{eqnarray*}
\max_{
\mathcal{E}\in DIO
}F(\mathcal{E}(\rho),\proj{\Psi_+})^2
=
\max_{\substack{
\mathcal{E}\in DIO\\
\ket{\Psi}\in\cM
}}F(\mathcal{E}(\rho),\proj{\Psi})^2.
\end{eqnarray*}
where $\ket{\Psi_+}=\frac{1}{\sqrt{d}}\sum^d_{i=1}\ket{i}$ and
$\cM$ is the set of maximally coherent states.
\end{lem}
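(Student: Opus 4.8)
The plan is to prove the two inequalities separately; the only substantive content is that every maximally coherent state is obtained from $\ket{\Psi_+}$ by a diagonal unitary, and that such a unitary already belongs to DIO. The inequality ``$\leq$'' is immediate, since $\ket{\Psi_+}\in\cM$: the left-hand side is exactly the right-hand optimization specialized to the single choice $\ket{\Psi}=\ket{\Psi_+}$, and so cannot exceed the maximum taken over all $\ket{\Psi}\in\cM$.

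For the reverse inequality ``$\geq$'', I would fix an arbitrary $\mathcal{E}\in DIO$ and $\ket{\Psi}\in\cM$ and manufacture a single DIO attaining the same fidelity against the \emph{fixed} target $\ket{\Psi_+}$. Writing $\ket{\Psi}=\frac{1}{\sqrt d}\sum_{j=1}^d e^{i\theta_j}\ket{j}$ and setting $U=\sum_j e^{i\theta_j}\proj{j}$, we have $\proj{\Psi}=U\proj{\Psi_+}U^\dag$. The crux is then the unitary invariance of the fidelity, which lets me move $U$ off the target and onto the channel output:
\begin{eqnarray*}
F(\mathcal{E}(\rho),\proj{\Psi})
&=&F\left(\mathcal{E}(\rho),U\proj{\Psi_+}U^\dag\right)\\
&=&F\left(U^\dag\mathcal{E}(\rho)U,\proj{\Psi_+}\right).
\end{eqnarray*}
Defining $\mathcal{E}'(\cdot):=U^\dag\mathcal{E}(\cdot)U$, this reads $F(\mathcal{E}'(\rho),\proj{\Psi_+})=F(\mathcal{E}(\rho),\proj{\Psi})$, so once I verify $\mathcal{E}'\in DIO$ the value on the right is matched by a DIO on the left with fixed target, giving ``$\geq$''.

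To establish $\mathcal{E}'\in DIO$, I would observe that conjugation by a diagonal unitary, $\Phi_U(\cdot)=U^\dag(\cdot)U$, is manifestly CPTP and commutes with the dephasing map $\Delta$: because $U$ is diagonal, $\Delta(U^\dag\sigma U)=\Delta(\sigma)$ and $U^\dag\Delta(\sigma)U=\Delta(\sigma)$ for every $\sigma$, so $[\Phi_U,\Delta]=0$ and hence $\Phi_U\in DIO$. Since $\mathcal{E}'=\Phi_U\circ\mathcal{E}$ is a composition of two dephasing-covariant channels, it is again dephasing-covariant, completing the argument. I do not anticipate a genuine obstacle: the whole point is that the extra freedom of optimizing over $\ket{\Psi}\in\cM$ is precisely the freedom of a diagonal unitary, which DIO already absorbs, so it contributes nothing beyond the fixed choice $\ket{\Psi_+}$. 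The only step deserving care is confirming $\Phi_U\in DIO$ rather than merely in the larger class MIO, which the commutation identities above settle directly; the same reasoning applies verbatim to IO and SIO, since a diagonal unitary is also an incoherent and a strictly incoherent operation, which is why the analogous reduction underlies the IO and SIO cases of Theorem \ref{thm:op_max}.
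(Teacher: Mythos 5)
Your proof is correct and follows essentially the same route as the paper's: both reduce an arbitrary maximally coherent target $\ket{\Psi}=U\ket{\Psi_+}$ with $U$ a diagonal unitary to the fixed target $\ket{\Psi_+}$ via unitary invariance of the fidelity, absorbing $U^\dag(\cdot)U$ into the channel and checking that the composite map remains in DIO. Your explicit verification that conjugation by a diagonal unitary commutes with $\Delta$ is exactly the observation $[U_{\Psi},\Delta]=0$ used in the paper, so there is nothing to add.
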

\begin{proof}
Due to \cite{Peng2015}, every maximally coherent can be expressed as
$\ket{\Psi}=\frac{1}{\sqrt{d}}\sum^d_{j=1}e^{i\theta_j}\ket{j}$, that is,
$\ket{\Psi}=U_{\Psi}\ket{\Psi_+}$ where $U_{\Psi}=\sum^d_{j=1}e^{i\theta_j}\proj{j}$.
Obviously, $[U_{\Psi}, \Delta]=0$, thus $U_{\Psi}\in DIO$ and
\begin{eqnarray*}
F(\mathcal{E}(\rho),\proj{\Psi})^2
&=&F(\mathcal{E}(\rho), U_{\Psi}\proj{\Psi_+}U^\dag_{\Psi})^2\\
&=&F(U^\dag_{\Psi}\mathcal{E}(\rho)U_{\Psi},\proj{\Psi})^2\\
&=&F(\mathcal{E}'(\rho),\proj{\Psi_+})^2,
\end{eqnarray*}
where $\mathcal{E}'(\cdot)=U^\dag_{\Psi}\mathcal{E}(\cdot)U_{\Psi}\in DIO$ as
$\mathcal{E}, U_{\Psi}\in DIO$.
\end{proof}

After these preparation, we begin to prove Theorem 1.

\begin{mproof}[Proof of Theorem 1]
If $\mathcal{E}$ belongs to
DIO, that is, we need to prove
\begin{eqnarray}\label{eq:DIO}
2^{C_{\max}(\rho)}=d\max_{\substack{
\mathcal{E}\in DIO\\
\ket{\Psi}\in\cM
}}F(\mathcal{E}(\rho),\proj{\Psi})^2,
\end{eqnarray}
where
$\cM$ is the set of maximally coherent states.
In view of Lemma \ref{lem:3},  we only need to prove
\begin{eqnarray}
2^{C_{\max}(\rho)}=d\max_{
\mathcal{E}\in DIO
}F(\mathcal{E}(\rho),\proj{\Psi_+})^2,
\end{eqnarray}
where $\ket{\Psi_+}=\frac{1}{\sqrt{d}}\sum^d_{i=1}\ket{i}$.

First of all,
\begin{eqnarray*}
2^{C_{\max}(\rho)}
&=&\min_{\sigma\in\cI} \min\set{\lambda|\rho\leq \lambda\sigma}\\
&=&\min_{\sigma\geq0}\set{\Tr{\sigma}|\rho\leq \Delta(\sigma)}\\
&=&\min_{\substack{
\sigma\geq0\\
\Delta(\sigma)\geq \rho
}}\Tr{\sigma}.
\end{eqnarray*}

Second,
\begin{eqnarray*}
dF(\mathcal{E}(\rho),\proj{\Psi_+})^2
&=&d\Tr{\mathcal{E}(\rho)\proj{\Psi_+}}\\
&=&d\Tr{\rho\mathcal{E}^\dag(\proj{\Psi_+})}\\
&=&\Tr{\rho\tau},
\end{eqnarray*}
where $\tau=d\mathcal{E}^\dag(\proj{\Psi_+}$. According to Lemma \ref{lem:2}, there is
one to one correspondence between $DIO$ and the set $\set{\tau\geq0|\Delta(\tau)=\mathbb{I}}$. Thus we have

\begin{eqnarray}
d\max_{
\mathcal{E}\in DIO
}F(\mathcal{E}(\rho),\proj{\Psi_+})^2
=\max_{\substack{
\tau\geq0\\
\Delta(\tau)=\mathbb{I}
}}\Tr{\rho\tau}.
\end{eqnarray}
Finally, according to Lemma \ref{lem:1}, we get the desired result \eqref{eq:DIO}.
Similarly, we can prove
the case where $\mathcal{E}$ belongs to
either IO or SIO based on Lemma \ref{lem:2}.
\end{mproof}

\section{Subchannel discrimination in dephasing covariant instrument}\label{apen:max_sub_dis}

\begin{mproof}[Proof of Theorem 2]
First, we consider the case where instrument $\mathfrak{I}$ is dephasing-covariant instrument  $\mathfrak{I}^D$. Due to the definition of $C_{\max}(\rho)$, there exists an incoherent state $\sigma$
such that $\rho\leq 2^{C_{\max}(\rho)}\sigma$. Thus, for any
dephasing-covariant instrument $\mathfrak{I}^D$ and
POVM $\set{M_b}_b$,
\begin{eqnarray*}
p_{\text{succ}}(\mathfrak{I}^D,\set{M_b}_b,\rho)\leq 2^{C_{\max}(\rho)} p_{\text{succ}}(\mathfrak{I}^D, \set{M_b}_b, \sigma),
\end{eqnarray*}
which implies that
\begin{eqnarray}\label{eq:dis_eql}
p_{\text{succ}}(\mathfrak{I}^D,\rho)\leq 2^{C_{\max}(\rho)} p^{ICO}_{\text{succ}}(\mathfrak{I}^D).
\end{eqnarray}

Next, we prove that there exists a dephasing-covariant instrument
$\mathfrak{I}^D$ such that the equality in \eqref{eq:dis_eql} holds.
In view of Theorem 1, there exists a DIO $\cE$
such that
\begin{eqnarray}
2^{C_{\max}(\rho)}=d\Tr{\cE(\rho)\proj{\Psi_+}},
\end{eqnarray}
where $\ket{\Psi_+}=\frac{1}{\sqrt{d}}\sum^d_{j=1}\ket{j}$.
Let us consider the following diagonal unitaries
\begin{eqnarray}
U_{k}=\sum^d_{j=1}e^{i\frac{jk}{d}2\pi}\proj{j},k\in\set{1,..,d}.
\end{eqnarray}
The set $\set{U_k\!\ket{\Psi_+}}^d_{k=1}$  forms a basis of the Hilbert space
and $\sum^d_{k=1}U_k\proj{\Psi_+}U^\dag_k=\mathbb{I}$.
Let us define subchannels $\set{\cE_k}_k$ as $\cE_k(\rho)=\frac{1}{d}U_k\cE(\rho) U^\dag_k$. Then
the channel $\widetilde{\cE}=\sum^d_{k=1}\cE_k$ is a DIO. That is, the instrument $\widetilde{\mathfrak{I}}^D=\set{\cE_k}_k$ is a
dephasing-covariant instrument.

For any POVM $\set{M_k}_k$  and any incoherent state $\sigma$, the probability of success is
\begin{eqnarray*}
p_{\text{succ}}(\widetilde{\mathfrak{I}}^D, \set{M_k}_k,\sigma)
&=&\sum_k\Tr{\cE_k(\sigma)M_k}\\
&=&\frac{1}{d}\Tr{\cE(\sigma)\sum_kU^\dag_kM_kU_k}.
\end{eqnarray*}
Since $\set{M_k}_k$ is a POVM, then
$\sum_kM_k=\mathbb{I}$. As $\set{U_k}_k$
are all diagonal unitaries , we have
\begin{eqnarray*}
\Delta(\sum_kU^\dag_kM_kU_k)
&=&\sum_k U^\dag_k\Delta(M_k)U_k \\
&=&\sum_k\Delta(M_k)\\
&=&\Delta(\sum_k M_k)\\
&=&\Delta(\mathbb{I})=\mathbb{I}.
\end{eqnarray*}
Thus,
\begin{eqnarray*}
p_{\text{succ}}(\widetilde{\mathfrak{I}}^D, \set{M_k}_k,\sigma)
&=&\frac{1}{d}\Tr{\cE(\sigma)\sum_kU^\dag_kM_kU_k}\\
&=&\frac{1}{d}\Tr{\Delta(\cE(\sigma))\sum_kU^\dag_kM_kU_k}\\
&=&\frac{1}{d}\Tr{\cE(\sigma)\Delta(\sum_kU^\dag_kM_kU_k)}\\
&=&\frac{1}{d}\Tr{\cE(\sigma)}
=\frac{1}{d},
\end{eqnarray*}
where the second equality comes from the fact that $\cE(\sigma)\in\cI$ for any incoherent state $\sigma$,  and the second last equality
comes from that fact that
$\Delta(\sum_kU^\dag_kM_kU_k)=\mathbb{I}$.
That is,
\begin{eqnarray}
p^{ICO}_{\text{succ}}(\widetilde{\mathfrak{I}}^D)=\frac{1}{d}.
\end{eqnarray}

Besides, taking the POVM  $\set{N_k}_k$ with $N_k=U_k\proj{\Psi_+}U^\dag_k$,
one has $\Tr{\cE_k(\rho)N_k}=\frac{1}{d}\Tr{\cE(\rho)\proj{\Psi_+}}$ and
\begin{eqnarray*}
p_{\text{succ}}(\widetilde{\mathfrak{I}}^D, \set{N_k}_k,\rho)
&=&\sum_k\Tr{\cE_k(\rho)N_k}\\
&=&\sum_k\frac{1}{d}\Tr{\cE(\rho)\proj{\Psi_+}}\\
&=&\Tr{\cE(\rho)\proj{\Psi_+}}\\
&=&\frac{2^{C_{\max}(\rho)}}{d}\\
&=&2^{C_{\max}(\rho)}p^{ICO}_{\text{succ}}(\widetilde{\mathfrak{I}}^D).
\end{eqnarray*}
Thus, for this depasing-covariant instrument $\widetilde{\mathfrak{I}}^D=\set{\cE_k}_k$,
\begin{eqnarray}
\frac{p_{\text{succ}}(\widetilde{\mathfrak{I}}^D,\rho)}{p^{ICO}_{\text{succ}}(\widetilde{\mathfrak{I}}^D)}
\geq2^{C_{\max}(\rho)}.
\end{eqnarray}

Finally it is easy to see that the above proof is also true for $\mathfrak{I}$  is
$\mathfrak{I}^I$ or $\mathfrak{I}^S$.

\end{mproof}

Note that the phasing discrimination game studied in \cite{Napoli2016} is just a special
case of the subchannel discrimination in the dephasing-covariant instruments.
In the phasing discrimination game, the phase $\phi_k$
is encoded into a diagonal unitary $U_{\phi_k}=\sum_j e^{ij\phi_k}\proj{j}$.
Thus the discrimination of a collection of phase $\set{\phi_k}$ with
a prior probability distribution $\set{p_k}$ is equivalent to the
discrimination of
the set of subchannel $\set{\cE_k}_k$, where $\cE_k=p_k\mathbf{U}_{k}$ and
$\mathbf{U}_k(\cdot)=U_{\phi_k}(\cdot) U^\dag_{\phi_k}$.

\section{$C^{\epsilon}_{\max}$ as a lower bound of one-shot coherence cost }

 The $\epsilon$-smoothed max-relative entropy of coherence of a quantum state $\rho$ is defined by,
\begin{eqnarray}
C^{\epsilon}_{\max}(\rho):=\min_{\rho'\in B_{\epsilon}(\rho)}C_{\max}(\rho'),
\end{eqnarray}
where $B_\epsilon(\rho):=\set{\rho'\geq0:\norm{\rho'-\rho}_1\leq \epsilon, \Tr{\rho'}\leq\Tr{\rho}}$.
Then
\begin{eqnarray*}
C^{\epsilon}_{\max}(\rho)&=&\min_{\rho'\in B_{\epsilon}(\rho)}\min_{\sigma\in \cI}D_{\max}(\rho'||\sigma)\\
&=&\min_{\sigma\in\cI} D^{\epsilon}_{\max}(\rho||\sigma),
\end{eqnarray*}
where $D^{\epsilon}_{\max}(\rho||\sigma)$  is the smooth max-relative entropy \cite{Datta2009IEEE,Datta2009,Brandao2011} and defined as
\begin{eqnarray*}
D^\epsilon_{\max}(\rho||\sigma)=\inf_{\rho'\in B_{\epsilon}(\rho)}D_{\min}(\rho'||\sigma).
\end{eqnarray*}

\begin{mproof}[Proof of Equation (5)]
Suppose $\mathcal{E}$ is MIO  such that
$F(\mathcal{E}(\proj{\Psi^M_+}),\rho)^2\geq 1-\epsilon$ and
$ C^{(1),\epsilon}_{C,MIO}(\rho)=\log M$.
Since $F(\rho,\sigma)^2\leq 1-\frac{1}{4}\norm{\rho-\sigma}^2_1$ \cite{Nielsen10}, then
$\norm{\mathcal{E}(\proj{\Psi^M_+})-\rho}_1\leq2\sqrt{\epsilon}$. Thus
$\mathcal{E}(\proj{\Psi^M_+})\in B_{\epsilon'}(\rho)$, where $\epsilon'=2\sqrt{\epsilon}$.
As $C_{\max}$
 is monotone under MIO, we have
$C^{\epsilon'}_{\max}(\rho)
\leq C_{\max}(\mathcal{E}(\proj{\Psi^M_+}))\leq C_{\max}(\proj{\Psi^M_+})=\log M= C^{(1),\epsilon}_{C,MIO}(\rho)$.
\end{mproof}

\section{Equivalence between $C_{\max}$ and $C_r$ in asymptotic case}\label{apen:eq_asy}

We introduce several lemmas first to prove the
result. For any self-adjoint operator $Q$ on a finite-dimensional
Hilbert space, $Q$ has the spectral decomposition as
$Q=\sum_i\lambda_iP_i$, where $P_i$ is the orthogonal projector onto
the eigenspace of $Q$. Then we define the positive operator
$\set{Q\geq 0}=\sum_{\lambda_i\geq 0}P_i$, and  $\set{Q>0}$, $\set{Q\leq 0}$, $\set{Q<0}$ are defined in a similar way.  Moreover, for any two operators
$Q_1$ and $Q_2$, $\set{Q_1\geq Q_2}$ is defined as $\set{Q_1-Q_2\geq 0}$.

\begin{lem}\cite{Datta2009}\label{lem:4}
Given two quantum states $\rho,\sigma\in\cD(\cH)$, then
\begin{eqnarray}
D^{\epsilon}_{\max}(\rho||\sigma)\leq\lambda
\end{eqnarray}
for any $\lambda\in\real$ and $\epsilon=\sqrt{8\Tr{\set{\rho>2^\lambda\sigma}\rho}}$.
\end{lem}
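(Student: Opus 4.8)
The plan is to unpack the definition and reduce everything to the construction of one sub-normalized operator. Since $D^{\epsilon}_{\max}(\rho\|\sigma)=\inf_{\rho'\in B_{\epsilon}(\rho)}D_{\max}(\rho'\|\sigma)$, and $D_{\max}(\rho'\|\sigma)\le\lambda$ is by definition the operator inequality $\rho'\le 2^{\lambda}\sigma$, the asserted bound $D^{\epsilon}_{\max}(\rho\|\sigma)\le\lambda$ will follow the moment I exhibit a single operator $\rho'$ satisfying $\rho'\ge 0$, $\Tr{\rho'}\le\Tr{\rho}$, $\norm{\rho'-\rho}_1\le\epsilon$, and $\rho'\le 2^{\lambda}\sigma$. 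So for the fixed $\lambda$ the whole task is to build a good truncation of $\rho$ at the threshold $2^{\lambda}\sigma$, and the value of $\epsilon$ in the statement should emerge from estimating how much of $\rho$ this truncation removes.

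First I would form the spectral projectors $P=\set{\rho>2^{\lambda}\sigma}$ and $Q=\mathbb{I}-P=\set{\rho\le 2^{\lambda}\sigma}$ of the self-adjoint operator $\rho-2^{\lambda}\sigma$, using the notation fixed just above the lemma, and take as candidate the truncated state $\rho'=Q\rho Q$, i.e.\ the part of $\rho$ lying below the threshold. The two trace conditions are then immediate: $\rho'\ge 0$, and $\Tr{\rho'}=\Tr{Q\rho}=\Tr{\rho}-\Tr{\set{\rho>2^{\lambda}\sigma}\rho}\le\Tr{\rho}$. For the trace-distance condition I would invoke the gentle measurement lemma: the weight of $\rho$ on the retained subspace is $\Tr{Q\rho}=\Tr{\rho}-\Tr{\set{\rho>2^{\lambda}\sigma}\rho}$, so when $\Tr{\set{\rho>2^{\lambda}\sigma}\rho}$ is small the overlap is close to one, and the lemma returns $\norm{\rho-Q\rho Q}_1\le\sqrt{8\,\Tr{\set{\rho>2^{\lambda}\sigma}\rho}}=\epsilon$. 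This is precisely the step that manufactures the square root and the constant $8$, so it is what dictates the choice $\epsilon=\sqrt{8\,\Tr{\set{\rho>2^{\lambda}\sigma}\rho}}$.

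The hard part will be the operator inequality $\rho'\le 2^{\lambda}\sigma$. By construction $Q(\rho-2^{\lambda}\sigma)Q\le 0$, which gives only $Q\rho Q\le 2^{\lambda}Q\sigma Q$; because $\rho$ and $\sigma$ need not commute I cannot pass from $Q\sigma Q$ to $\sigma$, as the compression inequality $Q\sigma Q\le\sigma$ fails in general once the off-diagonal blocks of $\sigma$ relative to $\{P,Q\}$ are nonzero. To secure the operator bound I would instead perform the truncation in the $\sigma$-frame: set $R=\sigma^{-1/2}\rho\sigma^{-1/2}$, cut $R$ at level $2^{\lambda}$ with the spectral projector of $R-2^{\lambda}\mathbb{I}$, and conjugate back by $\sigma^{1/2}$; the resulting operator is manifestly dominated by $2^{\lambda}\sigma$ and still lies below $\rho$. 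The genuine technical crux is then to reconcile the two truncations---to check that one and the same $\rho'$ simultaneously meets the sharp bound $\rho'\le 2^{\lambda}\sigma$ and the gentle-measurement estimate $\norm{\rho'-\rho}_1\le\epsilon$ tied to the threshold projector $\set{\rho>2^{\lambda}\sigma}$. It is exactly this interplay between non-commutativity and the trace-distance bookkeeping that makes the clean statement nontrivial, which is why it is quoted from \cite{Datta2009} rather than obtained from the naive compression argument.
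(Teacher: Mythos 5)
First, a point of reference: the paper does not actually prove this lemma. It is imported from \cite{Datta2009} with only the remark that the bipartite assumption made there is inessential, so there is no in-paper argument to compare your proposal against; what can be judged is whether your proposal stands as a self-contained proof, and at present it does not. Your reduction is correct: it suffices to exhibit one operator $\rho'\geq 0$ with $\Tr{\rho'}\leq\Tr{\rho}$, $\norm{\rho-\rho'}_1\leq\epsilon$ and $\rho'\leq 2^{\lambda}\sigma$. You also deserve credit for spotting the genuine obstruction that most informal accounts gloss over: for $Q=\set{\rho\leq 2^{\lambda}\sigma}$ one only gets $Q\rho Q\leq 2^{\lambda}Q\sigma Q$, and $Q\sigma Q\leq\sigma$ fails in general, since $\sigma-Q\sigma Q=P\sigma P+P\sigma Q+Q\sigma P$ has vanishing $Q$--$Q$ block and hence is positive only when the off-diagonal block $P\sigma Q$ vanishes.

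The gap is that your repair does not close the argument. The $\sigma$-frame truncation (cut $R=\sigma^{-1/2}\rho\sigma^{-1/2}$ at $2^{\lambda}$ with its own spectral projector $\Pi$ and set $\rho'=\sigma^{1/2}\Pi R\Pi\sigma^{1/2}$) indeed satisfies $0\leq\rho'\leq\rho$ and $\rho'\leq 2^{\lambda}\sigma$, but the lemma demands $\norm{\rho-\rho'}_1\leq\sqrt{8\Tr{\set{\rho>2^{\lambda}\sigma}\rho}}$, where the projector lives in the eigenbasis of $\rho-2^{\lambda}\sigma$, not of $R$. The gentle-measurement estimate you invoke applies to $Q\rho Q$ --- the operator you have just discarded --- and there is no a priori relation between $\Tr{\Pi^{\perp}R\Pi^{\perp}\sigma}$ and $\Tr{\set{\rho>2^{\lambda}\sigma}\rho}$. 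So after your two steps neither candidate is shown to meet all requirements simultaneously: one has the distance bound but not the operator bound, the other the operator bound but not the distance bound. Calling this reconciliation ``the technical crux'' flags the missing step but does not supply it, and it is exactly the step that produces the stated value of $\epsilon$. A construction that does work, and is the one underlying \cite{Datta2009} (following Renner), is $\rho'=G\rho G^{\dag}$ with $G=(2^{\lambda}\sigma)^{1/2}(2^{\lambda}\sigma+X_{+})^{-1/2}$, where $X_{+}=P(\rho-2^{\lambda}\sigma)P$ is the positive part of $\rho-2^{\lambda}\sigma$: then $\rho\leq 2^{\lambda}\sigma+X_{+}$ gives $\rho'\leq 2^{\lambda}\sigma$, $G^{\dag}G\leq\mathbb{I}$ gives $\Tr{\rho'}\leq\Tr{\rho}$, and a gentle-operator estimate bounds $\norm{\rho-G\rho G^{\dag}}_1$ in terms of $\Tr{X_{+}}\leq\Tr{\set{\rho>2^{\lambda}\sigma}\rho}$, which is where the $\sqrt{8\,(\cdot)}$ comes from. (Minor additional point: $\sigma^{-1/2}$ in your construction requires $\text{supp}\,\rho\subseteq\text{supp}\,\sigma$ or an explicit restriction to $\text{supp}\,\sigma$.)
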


Note that in \cite{Datta2009}, the above lemma is proved for bipartite states. However,
this lemma also holds for any state.

\begin{lem}
(\text{Fannes-Audenaert Inequality \cite{Audenaert2007}})
For any two quantum states $\rho$ and $\sigma$ with $\epsilon=\frac{1}{2}\norm{\rho-\sigma}_1$, the following inequality holds:
\begin{eqnarray}\label{ineq:FA}
|S(\rho)-S(\sigma)|\leq \epsilon\log(d-1)+H_2(\epsilon),
\end{eqnarray}
where $d$ is the dimension of the system and $H_2(\epsilon)
=-\epsilon\log \epsilon-(1-\epsilon)\log(1-\epsilon)$ is the binary
Shannon entropy.
\end{lem}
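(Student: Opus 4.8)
The plan is to reduce the operator statement to a purely classical inequality between the Shannon entropies of the two spectra, to prove that classical inequality as the analytic core, and then to glue the two together using the spectral geometry of the trace norm together with concavity of the right-hand side. The first step is to exploit that the von Neumann entropy depends only on the spectrum: writing $r=(r_1,\dots,r_d)$ and $s=(s_1,\dots,s_d)$ for the eigenvalue vectors of $\rho$ and $\sigma$, one has $S(\rho)=H(r)$ and $S(\sigma)=H(s)$, with $H$ the Shannon entropy, so that $|S(\rho)-S(\sigma)|=|H(r)-H(s)|$ is already classical. The obstruction is that $\epsilon=\frac12\norm{\rho-\sigma}_1$ is \emph{not} determined by $r$ and $s$ alone, since it also depends on the relative eigenbases.

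The analytic heart is the classical Fannes--Audenaert bound: for probability vectors $p,q$ on $d$ letters with $t=\frac12\norm{p-q}_1$, one has $|H(p)-H(q)|\le t\log(d-1)+H_2(t)=:f(t)$. I would prove this by a constrained optimization: fix $t$ and maximize $H(p)-H(q)$; a Lagrange/majorization analysis shows the optimizer has the two-parameter ``spike-plus-flat'' form $q=(1,0,\dots,0)$ and $p=(1-t,\tfrac{t}{d-1},\dots,\tfrac{t}{d-1})$, for which $H(p)-H(q)=f(t)$ exactly, witnessing tightness. Showing that no other configuration beats this extremal pair, while correctly handling the nonsmooth $\ell_1$ constraint and the permutation freedom, is the main obstacle; this is precisely Audenaert's sharpening of the original Fannes estimate, and I expect it to absorb the bulk of the work.

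Finally I would transfer the classical bound to operators. Two spectral inequalities sandwich the trace distance between the extreme rearrangement distances of the spectra: by Mirsky's theorem $\tfrac12\norm{r^\downarrow-s^\downarrow}_1\le \epsilon$, and by the complementary Lidskii-type bound $\epsilon\le \tfrac12\norm{r^\downarrow-s^\uparrow}_1$, where $\downarrow$ and $\uparrow$ denote decreasing and increasing orderings. Applying the classical inequality both to the aligned pairing $(r^\downarrow,s^\downarrow)$ and to the reversed pairing $(r^\downarrow,s^\uparrow)$, each of which carries the same entropy difference $|H(r)-H(s)|$ because $H$ is permutation invariant, gives $|H(r)-H(s)|\le f(T_{\min})$ and $|H(r)-H(s)|\le f(T_{\max})$, with $T_{\min}$ and $T_{\max}$ the two rearrangement distances above. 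Since $f(t)=t\log(d-1)+H_2(t)$ is concave (a linear term plus the concave binary entropy), its value at the interior point $\epsilon\in[T_{\min},T_{\max}]$ dominates the chord joining the endpoints and hence the smaller endpoint value, so $f(\epsilon)\ge\min\{f(T_{\min}),f(T_{\max})\}\ge|H(r)-H(s)|=|S(\rho)-S(\sigma)|$. This concavity argument is what defuses the apparent difficulty that $f$ fails to be monotone beyond $t=(d-1)/d$: rather than pushing a single bound through a non-monotone function, one bounds the entropy difference by $f$ at \emph{both} extreme rearrangement distances and lets concavity fill in every intermediate $\epsilon$.
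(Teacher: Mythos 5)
The paper offers no proof of this lemma at all --- it is imported verbatim from Audenaert \cite{Audenaert2007} --- so the only baseline to compare against is the cited literature, whose two-stage strategy (a sharp classical bound plus a spectral reduction) your outline reproduces. Your reduction step is correct and worth making explicit: Mirsky's inequality gives $T_{\min}:=\frac{1}{2}\norm{r^\downarrow-s^\downarrow}_1\leq\epsilon$, while Ky Fan's majorization $\lambda(\rho-\sigma)\prec r^\downarrow-s^\uparrow$ (take $A=\rho$, $B=-\sigma$ in $\lambda(A+B)\prec\lambda^\downarrow(A)+\lambda^\downarrow(B)$) together with Schur-convexity of the $\ell_1$-norm gives $\epsilon\leq T_{\max}:=\frac{1}{2}\norm{r^\downarrow-s^\uparrow}_1$; since $f(t)=t\log(d-1)+H_2(t)$ is concave on $[0,1]$, its minimum over $[T_{\min},T_{\max}]$ sits at an endpoint, and both endpoint values dominate $|H(r)-H(s)|$ by the classical bound applied to the aligned and anti-aligned pairings. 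This is exactly the right way to defuse the non-monotonicity of $f$ past $t=(d-1)/d$. The one weakness is self-containedness: the classical optimization, which you rightly call the analytic core, is only sketched and is justified by appeal to ``Audenaert's sharpening,'' which is circular here since the lemma being proved \emph{is} Audenaert's inequality; the spike-plus-flat pair certifies tightness, but optimality still needs an argument (e.g., convexity of $-H(q)$ in $q$ pushes $q$ to a vertex of the constraint set, followed by a one-parameter analysis over $p$). Note also that for the paper's sole use of the lemma --- the asymptotic equivalence of $C^{\epsilon}_{\max}$ and $C_r$ --- sharpness is immaterial: the original, weaker Fannes estimate would suffice, since the additive corrections vanish after dividing by $n$ and letting $\epsilon\to 0$.
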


Based on these lemmas, we can prove the equivalence between $C_{\max}$ and $C_r$ in asymptotic limit.
\begin{mproof}[Proof of Equation (6)]
First, we prove that
\begin{eqnarray}
C_r(\rho)\leq\lim_{\epsilon\to 0}\lim_{n\to \infty}
\frac{1}{n}C^{\epsilon}_{\max}(\rho^{\ot n}).
\end{eqnarray}

Since
\begin{eqnarray*}
C^{\epsilon}_{\max}(\rho^{\ot n})
&=&\min_{\sigma_n\in\cI_n}D^{\epsilon}_{\max}(\rho^{\ot n}||\sigma_n)\\
&=&D_{\max}(\rho_{n,\epsilon}||\tilde{\sigma}_n),
\end{eqnarray*}
where $\rho_{n,\epsilon}\in B_{\epsilon}(\rho^{\ot n})$,
$\tilde{\sigma}_n\in \cI_n$ and $\cI_n$ is the set of incoherent states of
$\cH^{\ot n}$.

Due to the definition of $\cD_{\max}$, we have
\begin{eqnarray}
\rho_{n,\epsilon}\leq 2^{C^{\epsilon}_{\max}(\rho^{\ot n})}\tilde{\sigma}_n.
\end{eqnarray}
Then
\begin{eqnarray*}
C_r(\rho_{n,\epsilon})
&\leq& S(\rho_{n,\epsilon}||\tilde{\sigma}_n)\\
&=&\Tr{\rho_{n,\epsilon}\log \rho_{n,\epsilon}}-\Tr{\rho_{n,\epsilon}\log \tilde{\sigma}_n}\\
&\leq&\Tr{\rho_{n,\epsilon}(C^{\epsilon}_{max}(\rho^{\ot n})+\log\tilde{\sigma}_n)}-\Tr{\rho_{n,\epsilon}\log \tilde{\sigma}_n}\\
&\leq& C^{\epsilon}_{\max}(\rho^{\ot n}),
\end{eqnarray*}
where $\Tr{\rho_{n,\epsilon}}\leq \Tr{\rho^{\ot n}}=1$. Besides, as $\norm{\rho_{n,\epsilon}-\rho^{\ot n}}\leq \epsilon$,
due to the Fannes-Audenaert Inequality \eqref{ineq:FA}, we have
\begin{eqnarray*}
C_r(\rho_{n,\epsilon})
&\geq& C_r(\rho^{\ot n})-\epsilon\log(d-1)-H_2(\epsilon)\\
&=&nC_r(\rho)-\epsilon\log(d-1)-H_2(\epsilon).
\end{eqnarray*}
Thus,
\begin{eqnarray}
\lim_{\epsilon\to 0}\lim_{n\to\infty}\frac{1}{n}C^{\epsilon}_{\max}(\rho^{\ot n})
\geq C_r(\rho).
\end{eqnarray}

Next, we prove that
\begin{eqnarray*}
C_r(\rho)\geq \lim_{\epsilon\to 0}\lim_{n\to \infty}
\frac{1}{n}C^{\epsilon}_{\max}(\rho^{\ot n}).
\end{eqnarray*}
Consider the sequence $\hat{\rho}=\set{\rho^{\ot n}}^{\infty}_{n=1}$ and
$\hat{\sigma}_I=\set{\sigma^{\ot n}_I}^\infty_{n=1}$, where $\sigma_I\in\cI$ such that $C_r(\rho)=S(\rho||\sigma_I)=\min_{\sigma\in \cI} S(\rho||\sigma)$.
Denote
\begin{eqnarray*}
\overline{D}(\hat{\rho}||\hat{\sigma})
:=\inf\set{\gamma:\lim_{n\to \infty}\sup\Tr{\set{\rho^{\ot n}\geq 2^{n\gamma}\sigma^{\ot n}_I}\rho^{\ot n}}=0   }.
\end{eqnarray*}
Due to the Quantum Stein's Lemma \cite{Ogawa2000,Datta2009},
\begin{eqnarray*}
\overline{D}(\hat{\rho}||\hat{\sigma})
=S(\rho||\sigma_I)=C_r(\rho).
\end{eqnarray*}

For any $\delta>0$, let $\lambda=\overline{D}(\hat{\rho}||\hat{\sigma})+\delta=S(\rho||\sigma_I)=C_r(\rho)+\delta$.
Due to the definition of the quantity $\overline{D}(\hat{\rho}||\hat{\sigma})$, we have
\begin{eqnarray*}
\lim_{n\to \infty}\sup\Tr{\set{\rho^{\ot n}\geq 2^{n\lambda}\sigma^{\ot n}_I}\rho^{\ot n}}=0.
\end{eqnarray*}
Then  for any $\epsilon>0$, there exists an integer $N_0$ such that for any $n\geq N_0$, $\Tr{\set{\rho^{\ot n}\geq 2^{n\lambda}\sigma^{\ot n}_I}\rho^{\ot n}}<\frac{\epsilon^2}{8}$.
According to Lemma \ref{lem:4}, we have
\begin{eqnarray}
D^{\epsilon}_{\max}(\rho^{\ot n}||\sigma^{\ot n}_I)\leq n\lambda =nC_r(\rho)+n\delta.
\end{eqnarray}
for $n\geq N_0$. Hence,
\begin{eqnarray*}
C^{\epsilon}_{\max}(\rho^{\ot n})\leq nC_r(\rho)+n\delta.
\end{eqnarray*}
Therefore
\begin{eqnarray*}
\lim_{\epsilon\to 0}\lim_{n\to\infty}\frac{1}{n}
C^{\epsilon}_{\max}(\rho^{\ot n})\leq C_r(\rho)+\delta.
\end{eqnarray*}
Since $\delta$ is arbitrary,
$\lim_{\epsilon\to 0}\lim_{n\to\infty}\frac{1}{n}
C^{\epsilon}_{\max}(\rho^{\ot n})\leq C_r(\rho)$.

\end{mproof}

\section{
$C_{\min}$ may increase on average under IO
}\label{apen:min_incr}

For pure state $\ket{\psi}=\sum^d_{i=1}\psi_i\ket{i}$, we have
$C_{\min}(\psi)=-\log\max_{i}|\psi_i|^2$. According to \cite{Du2015},
if $C_{\min}$ is nonincreasing on average under IO,
it requires that $C_{\min}$ should be a concave function of its diagonal part for
pure state. However, $C_{\min}(\psi)=-\log\max_{i}|\psi_i|^2$ is convex on the
diagonal part of the pure states, hence $C_{\min}$ may increase on average
under IO.

Besides, according the definition of $C_{\min}$ for pure state $\ket{\psi}$, one has
\begin{eqnarray}
2^{-C_{\min}(\psi)}
=\max_{\sigma\in\cI}F(\psi,\sigma)^2.
\end{eqnarray}
However, this equality does not hold for any states. For any quantum state, the inequality (13) in the main context holds.

\begin{mproof}[Proof of Equation (13)]
There exists a $\sigma_*\in\cI$ such that
$F(\rho,\sigma_*)^2=\max_{\sigma\in\cI}F(\rho,\sigma)^2$. Let us consider the spectrum decomposition
of the quantum state $\rho$, $\rho=\sum_{i}\lambda_i\proj{\psi_i}$ with $\lambda_i>0$ and $\sum_i\lambda_i=1$. Then
the projector $\Pi_{\rho}$ onto the support of $\rho$  can be written as
$\Pi_{\rho}=\sum_i\proj{\psi_i}$ and $2^{-C_{\min}(\rho)}=\max_{\sigma\in\cI}\Tr{\Pi_{\rho}\sigma}$.

Besides, there exists pure state decomposition of $\sigma_*=\sum_{i}\mu_i\proj{\phi_i}$ such that
$F(\rho,\sigma_*)=\sum_i\sqrt{\lambda_i\mu_i}\iinner{\psi_i}{\phi_i}$ \cite{Hayashi2006}. Thus
\begin{eqnarray*}
F(\rho,\sigma_*)^2&=&(\sum_i\sqrt{\lambda_i\mu_i}\iinner{\psi_i}{\phi_i})^2\\
&\leq&(\sum_i\lambda_i)(\sum_i\mu_i\Tr{\proj{\psi_i}\proj{\phi_i}})\\
&\leq&\sum_i\mu_i\Tr{\Pi_{\rho}\proj{\phi_i}}\\
&=&\Tr{\Pi_{\rho}\sigma_*}\\
&\leq& 2^{-C_{\min}(\rho)},
\end{eqnarray*}
where the first inequality is due to the Cauchy-Schwarz inequality and the second inequality comes from the fact that
$\sum_i\lambda_i=1$ and $\proj{\psi_i}\leq \Pi_{\rho}$ for any i.
\end{mproof}

\section{Equivalence between $C_{\min}$ and $C_r$ in asymptotic case}\label{apen:min_asym}
For any $\epsilon>0$, the smooth min-relative entropy of coherence of a quantum state $\rho$ is defined as follows
\begin{eqnarray}
C^{\epsilon}_{\min}(\rho):=\max_{\substack{
0\leq A\leq \mathbb{I}\\
\Tr{A\rho}\geq 1-\epsilon}}
\min_{\sigma\in\cI}-\log\Tr{A\sigma},
\end{eqnarray}
where $\mathbb{I}$ denotes the
identity.
Then
\begin{eqnarray*}
C^{\epsilon}_{\min}(\rho)
&=&\max_{\substack{
0\leq A\leq \mathbb{I}\\
\Tr{A\rho}\geq 1-\epsilon}}
\min_{\sigma\in\cI}-\log\Tr{A\sigma}\\
&=&\min_{\sigma\in\cI}\max_{\substack{
0\leq A\leq \mathbb{I}\\
\Tr{A\rho}\geq 1-\epsilon}}
-\log\Tr{A\sigma}\\
&=&\min_{\sigma\in\cI}D^{\epsilon}_{\min}(\rho||\sigma),
\end{eqnarray*}
where $D^{\epsilon}_{\min}(\rho||\sigma)$ is the smooth min-relative entropy \cite{Brandao2011} and
defined as
\begin{eqnarray*}
D^\epsilon_{\min}(\rho||\sigma)=
\sup_{\substack{
0\leq A\leq \mathbb{I}\\
\Tr{A\rho}\geq 1-\epsilon}}
-\log\Tr{A\sigma}.
\end{eqnarray*}

\begin{lem}
Given a quantum state $\rho\in\cD(\cH)$, for any $\epsilon>0$,
\begin{eqnarray}
C^{\epsilon}_{\min}(\rho)\leq C^{\epsilon}_{\max}(\rho)-\log(1-2\epsilon).
\end{eqnarray}

\end{lem}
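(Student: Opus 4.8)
The plan is to establish the inequality at the level of a fixed incoherent $\sigma$ and then minimize. Concretely, I would first prove the pointwise bound
\[ D^{\epsilon}_{\min}(\rho||\sigma)\leq D^{\epsilon}_{\max}(\rho||\sigma)-\log(1-2\epsilon) \]
for every $\sigma\in\cI$. Granting this, I minimize the left-hand side over $\sigma\in\cI$ and invoke the decompositions $C^{\epsilon}_{\min}(\rho)=\min_{\sigma\in\cI}D^{\epsilon}_{\min}(\rho||\sigma)$ and $C^{\epsilon}_{\max}(\rho)=\min_{\sigma\in\cI}D^{\epsilon}_{\max}(\rho||\sigma)$ established just above; since the additive constant $-\log(1-2\epsilon)$ does not depend on $\sigma$, this yields $C^{\epsilon}_{\min}(\rho)\leq C^{\epsilon}_{\max}(\rho)-\log(1-2\epsilon)$ immediately.

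For the pointwise bound I would fix $\sigma$ and set $\lambda=D^{\epsilon}_{\max}(\rho||\sigma)$. By definition of the smooth max-relative entropy there is a $\rho'\in B_{\epsilon}(\rho)$ with $\rho'\leq 2^{\lambda}\sigma$, equivalently $\sigma\geq 2^{-\lambda}\rho'$ (in finite dimension $B_{\epsilon}(\rho)$ is compact and the infimum is attained; otherwise one works with $\lambda+\delta$ and sends $\delta\to 0$ at the end). Now take any test operator $A$ feasible for $D^{\epsilon}_{\min}(\rho||\sigma)$, i.e. $0\leq A\leq\mathbb{I}$ with $\Tr{A\rho}\geq 1-\epsilon$. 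Positivity of $A$ combined with $\sigma\geq 2^{-\lambda}\rho'$ gives $\Tr{A\sigma}\geq 2^{-\lambda}\Tr{A\rho'}$, so the problem reduces to lower bounding $\Tr{A\rho'}$.

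The bridge between the two smoothings is the perturbation estimate $\abs{\Tr{A(\rho'-\rho)}}\leq\norm{A}_\infty\norm{\rho'-\rho}_1\leq\epsilon$, which follows from H\"older's inequality together with $\norm{A}_\infty\leq 1$ and $\rho'\in B_{\epsilon}(\rho)$; hence $\Tr{A\rho'}\geq\Tr{A\rho}-\epsilon\geq 1-2\epsilon$. Combining the bounds gives $\Tr{A\sigma}\geq 2^{-\lambda}(1-2\epsilon)$, so $-\log\Tr{A\sigma}\leq\lambda-\log(1-2\epsilon)$, and taking the supremum over admissible $A$ yields the pointwise bound. The step carrying the real content is precisely this bridge: $D^{\epsilon}_{\max}$ is smoothed by optimizing over nearby states $\rho'$, whereas $D^{\epsilon}_{\min}$ is smoothed by optimizing over measurement operators $A$, so the whole argument hinges on transferring the near-optimal $\rho'$ from the max side into the constraint $\Tr{A\rho}\geq 1-\epsilon$ on the min side through the trace-norm ball---this is exactly where the factor $1-2\epsilon$ (rather than $1-\epsilon$) is forced.
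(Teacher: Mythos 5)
Your proposal is correct and follows essentially the same route as the paper's proof: reduce to the pointwise bound $D^{\epsilon}_{\min}(\rho||\sigma)\leq D^{\epsilon}_{\max}(\rho||\sigma)-\log(1-2\epsilon)$, take a near-optimal $\rho'\in B_{\epsilon}(\rho)$ with $\rho'\leq 2^{\lambda}\sigma$, and transfer it into the min-side constraint via $\abs{\Tr{A(\rho'-\rho)}}\leq\norm{\rho'-\rho}_1\leq\epsilon$ to get $\Tr{A\sigma}\geq 2^{-\lambda}(1-2\epsilon)$. Your explicit remarks on attainment of the infimum and the H\"older step are minor refinements of the same argument.
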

\begin{proof}
Since
\begin{eqnarray*}
C^{\epsilon}_{\max}(\rho)&=&\min_{\sigma\in\cI}D^{\epsilon}_{\max}(\rho||\sigma),\\
C^{\epsilon}_{\min}(\rho)&=&\min_{\sigma\in\cI}D^{\epsilon}_{\min}(\rho||\sigma),
\end{eqnarray*}
we only need to prove that  for any two states $\rho$ and $\sigma$,
\begin{eqnarray}
D^{\epsilon}_{\min}(\rho||\sigma)
\leq D^{\epsilon}_{\max}(\rho||\sigma)-\log(1-2\epsilon).
\end{eqnarray}

First, there exists a $\rho_{\epsilon}\in B_{\epsilon}(\rho)$ such that
$D^{\epsilon}_{\max}(\rho||\sigma)=D_{\max}(\rho_{\epsilon}||\sigma)=\log\lambda$. Hence $\lambda\sigma-\rho_{\epsilon}\geq 0$.

Second, let $0\leq A\leq \mathbb{I}$, $\Tr{A\rho}\geq 1-\epsilon$ such that
$D^{\epsilon}_{\min}(\rho||\sigma)=-\log\Tr{A\sigma}$.
Since for any two positive operators $A$ and $B$, $\Tr{AB}\geq 0$. Therefore
$\Tr{(\lambda\sigma-\rho_{\epsilon})A}\geq0$, that is,
\begin{eqnarray*}
\Tr{A\rho_{\epsilon}}
\leq\lambda\Tr{A\sigma}.
\end{eqnarray*}

Since $\norm{\rho-\rho_{\epsilon}}_1=\Tr{|\rho-\rho_{\epsilon}|}<\epsilon$ and
$\Tr{ A\rho}\geq1-\epsilon$, one gets
\begin{eqnarray*}
|\Tr{A\rho_{\epsilon}}-\Tr{A\rho}|
&\leq& \Tr{A|\rho-\rho_{\epsilon}|}\\
&\leq&\Tr{|\rho-\rho_{\epsilon}|}<\epsilon.
\end{eqnarray*}
Thus,
\begin{eqnarray*}
\Tr{A\rho_{\epsilon}}
\geq \Tr{A\rho}-\epsilon\geq 1-2\epsilon,
\end{eqnarray*}
which implies that
\begin{eqnarray*}
1-2\epsilon\leq\lambda\Tr{A\sigma}.
\end{eqnarray*}
Take logarithm on both sides of the above inequality,
we have
\begin{eqnarray*}
-\log\Tr{A\sigma}\leq\log\lambda-\log(1-2\epsilon).
\end{eqnarray*}
That is,
\begin{eqnarray}
D^{\epsilon}_{\min}(\rho||\sigma)
\leq D^{\epsilon}_{\max}(\rho||\sigma)-\log(1-2\epsilon).
\end{eqnarray}

\end{proof}

The following lemma is a kind of generalization of the Quantum Stein' Lemma \cite{Brandao2010CMP} for the special case of the incoherent state set $\cI$, as the the set of incoherent states satisfies the requirement in \cite{Brandao2010CMP}.  Note that this lemma can be generalized to any quantum resource theory which satisfies some postulates \cite{FBrandao15}
and it is called the exponential distinguishability property (EDP) (see \cite{FBrandao15}).

\begin{lem}\label{lem:stein}
Given a quantum state $\rho\in\cD(\cH)$,

(Direct part) For any $\epsilon>0$, there exists a sequence of POVMs $\set{A_n, \mathbb{I}-A_n}_n$ such that
\begin{eqnarray}
\lim_{n\to\infty}
\Tr{(\mathbb{I}-A_n)\rho^{\ot n}}=0,
\end{eqnarray}
and for every integer $n$ and incoherent state $w_n\in\cI_n$ with $\cI_n$ is the
set of incoherent states on $\cH^{\ot n}$,
\begin{eqnarray}
-\frac{\log\Tr{A_nw_n}}{n}+\epsilon
\geq C_r(\rho).
\end{eqnarray}

(Strong converse) If there exists $\epsilon>0$ and a sequence of POVMs $\set{A_n, \mathbb{I}-A_n}_n$ such that for every integer $n>0$ and $w_n\in\cI_n$,
\begin{eqnarray}
-\frac{\log\Tr{A_nw_n}}{n}-\epsilon
\geq C_r(\rho),
\end{eqnarray}
then
\begin{eqnarray}
\lim_{n\to\infty}\Tr{(\mathbb{I}-A_n)\rho^{\ot n}}=1.
\end{eqnarray}

\end{lem}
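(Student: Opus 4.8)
The plan is to recognise Lemma~\ref{lem:stein} as a direct instance of the composite (generalised) Quantum Stein's Lemma of Brand\~{a}o and Plenio \cite{Brandao2010CMP}, specialised to the family of incoherent-state sets $\set{\cI_n}_n$, where $\cI_n$ is the set of incoherent states on $\cH^{\ot n}$. In that setting one tests the null hypothesis $\rho^{\ot n}$ against the composite alternative ``some $w_n\in\cI_n$'' with a POVM $\set{A_n,\mathbb{I}-A_n}$, and the theorem identifies the optimal type-II error exponent (subject to vanishing type-I error) with the regularised relative entropy $\lim_{n\to\infty}\frac1n\min_{\sigma_n\in\cI_n}S(\rho^{\ot n}||\sigma_n)$, together with its strong converse. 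The Direct part of the Lemma is exactly the achievability half ($\Tr{(\mathbb{I}-A_n)\rho^{\ot n}}\to0$ with $-\frac1n\log\Tr{A_nw_n}\geq C_r(\rho)-\epsilon$), and the Strong converse is the converse half (if $-\frac1n\log\Tr{A_nw_n}\geq C_r(\rho)+\epsilon$ then the type-I error tends to $1$).

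First I would verify that $\set{\cI_n}_n$ meets the postulates required by \cite{Brandao2010CMP,FBrandao15}. Writing incoherence as diagonality in the product reference basis $\set{\ket{i_1}\ot\cdots\ot\ket{i_n}}$, each needed property is immediate: every $\cI_n$ is convex and closed; it contains the full-rank state $(\mathbb{I}/d)^{\ot n}=\mathbb{I}/d^n$; and it is closed under tensor products ($\cI_m\ot\cI_n\subset\cI_{m+n}$), under partial traces, and under permutations of the $n$ factors, since all of these operations preserve diagonality in the product basis. This is precisely the ``exponential distinguishability'' data needed to invoke the generalised Stein's Lemma.

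The second step is to evaluate the resulting rate. By definition of the relative entropy of coherence, $\min_{\sigma_n\in\cI_n}S(\rho^{\ot n}||\sigma_n)=C_r(\rho^{\ot n})$, with the minimum attained at the dephased state $\Delta^{\ot n}(\rho^{\ot n})=(\Delta\rho)^{\ot n}$. Since $C_r=S\circ\Delta-S$ and the full dephasing on $n$ copies factorises, $C_r(\rho^{\ot n})=S((\Delta\rho)^{\ot n})-S(\rho^{\ot n})=n\pa{S(\Delta\rho)-S(\rho)}=nC_r(\rho)$. Hence the regularised rate equals $C_r(\rho)$, and the achievability and converse halves of the Brand\~{a}o--Plenio theorem become exactly the Direct part and Strong converse claimed, with the $\pm\epsilon$ slack absorbing the $n\to\infty$ convergence.

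I expect the genuine difficulty to lie entirely inside the cited generalised Stein's Lemma, in particular the strong converse and the fact that optimising the type-II error over the whole composite family $\cI_n$ (rather than a single fixed alternative state) still yields the single regularised rate; this is what \cite{Brandao2010CMP} supplies. What remains on our side---the closure postulates for $\set{\cI_n}_n$ and the additivity $C_r(\rho^{\ot n})=nC_r(\rho)$---is routine, so the main obstacle reduces to confirming that coherence theory satisfies the hypotheses of \cite{Brandao2010CMP,FBrandao15} rather than to establishing any new analytic estimate.
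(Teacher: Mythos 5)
Your proposal is correct and takes exactly the route the paper does: the paper offers no proof of this lemma beyond the remark that it is an instance of the generalized Quantum Stein's Lemma of Brand\~{a}o--Plenio because the incoherent-state sets satisfy the required postulates. You have simply made explicit the two routine verifications the paper leaves implicit, namely the closure properties of $\set{\cI_n}_n$ and the additivity $C_r(\rho^{\ot n})=nC_r(\rho)$ identifying the regularised rate.
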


Now, we are ready to prove the equivalence between $C_{\min}$ and $C_{r}$ in asymptotic limit.
\begin{mproof}[Proof of Equation (18)]
First we prove that
\begin{eqnarray*}
\lim_{\epsilon\to 0}\lim_{n\to\infty}\frac{1}{n}C^{\epsilon}_{\min}(\rho^{\ot n})
\leq C_r(\rho).
\end{eqnarray*}
Since $C_r(\rho)=\lim_{\epsilon\to 0}\lim_{n\to\infty}\frac{1}{n}C^{\epsilon}_{\max}(\rho^{\ot n})$ and
$C^{\epsilon}_{\min}(\rho)\leq C^{\epsilon}_{\max}(\rho)-\log(1-2\epsilon)$, then
we have
\begin{eqnarray*}
\lim_{\epsilon\to 0}\lim_{n\to\infty}\frac{1}{n}C^{\epsilon}_{\min}(\rho^{\ot n})
\leq C_r(\rho).
\end{eqnarray*}

Now we prove that
\begin{eqnarray*}
\lim_{\epsilon\to 0}\lim_{n\to\infty}\frac{1}{n}C^{\epsilon}_{\min}(\rho^{\ot n})
\geq C_r(\rho)
\end{eqnarray*}
According to Lemma \ref{lem:stein},
for any $\epsilon>0$, there exists a sequence of POVMs $\set{A_n}$ such that for sufficient large integer $n$,  $\Tr{\rho^{\ot n}A_n}\geq1-\epsilon$, and thus
\begin{eqnarray*}
C^{\epsilon}_{\min}(\rho^{\ot n})
\geq \min_{\sigma\in\cI}-\log\Tr{A_n\sigma}\geq n(C_r(\rho)-\epsilon),
\end{eqnarray*}
where the last inequality comes from the direct part of Lemma \ref{lem:stein}.
Therefore,
\begin{eqnarray*}
\lim_{\epsilon\to 0}
\lim_{n\to\infty}
\frac{1}{n}C^{\epsilon}_{\min}(\rho^{\ot n})
\geq C_r(\rho).
\end{eqnarray*}

\end{mproof}

\section{$C^{\epsilon}_{\min}$ as an upper bound of one-shot distillable coherence }\label{apen:min_dis}

\begin{lem}\label{lem:smo_min}
Given a quantum state $\rho\in\cD(\cH)$, then for any $\mathcal{E}\in MIO$,
\begin{eqnarray}
C^{\epsilon}_{\min}(\mathcal{E}(\rho))
\leq C^{\epsilon}_{\min}(\rho).
\end{eqnarray}
\end{lem}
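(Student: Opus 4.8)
The plan is to prove this monotonicity directly from the variational form derived just above, namely $C^{\epsilon}_{\min}(\rho)=\min_{\sigma\in\cI}D^{\epsilon}_{\min}(\rho||\sigma)$ with $D^{\epsilon}_{\min}(\rho||\sigma)=\sup_{0\leq A\leq\mathbb{I},\,\Tr{A\rho}\geq1-\epsilon}-\log\Tr{A\sigma}$. The idea is to take the optimal measurement operator for the left-hand side $C^{\epsilon}_{\min}(\mathcal{E}(\rho))$ and pull it back through the adjoint channel $\mathcal{E}^{\dagger}$, using two structural facts about $\mathcal{E}\in MIO$: that $\mathcal{E}^{\dagger}$ is completely positive and unital (the latter because $\mathcal{E}$ is trace-preserving), and that $\mathcal{E}(\cI)\subset\cI$.

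First I would fix an optimal $A^{*}$ for the outer maximization defining $C^{\epsilon}_{\min}(\mathcal{E}(\rho))$, so that $0\leq A^{*}\leq\mathbb{I}$, $\Tr{A^{*}\mathcal{E}(\rho)}\geq1-\epsilon$, and $C^{\epsilon}_{\min}(\mathcal{E}(\rho))=\min_{\sigma\in\cI}-\log\Tr{A^{*}\sigma}$. I would then set $A'=\mathcal{E}^{\dagger}(A^{*})$ and verify that it is feasible for the maximization defining $C^{\epsilon}_{\min}(\rho)$: complete positivity of $\mathcal{E}^{\dagger}$ gives $A'\geq0$, unitality together with $A^{*}\leq\mathbb{I}$ gives $A'=\mathcal{E}^{\dagger}(A^{*})\leq\mathcal{E}^{\dagger}(\mathbb{I})=\mathbb{I}$, and the adjoint identity gives the success constraint $\Tr{A'\rho}=\Tr{\mathcal{E}^{\dagger}(A^{*})\rho}=\Tr{A^{*}\mathcal{E}(\rho)}\geq1-\epsilon$.

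Feasibility of $A'$ then yields
\[
C^{\epsilon}_{\min}(\rho)\geq\min_{\sigma\in\cI}-\log\Tr{A'\sigma}=\min_{\sigma\in\cI}-\log\Tr{A^{*}\mathcal{E}(\sigma)},
\]
where the last equality is again the adjoint identity $\Tr{\mathcal{E}^{\dagger}(A^{*})\sigma}=\Tr{A^{*}\mathcal{E}(\sigma)}$. The final step invokes the defining property of MIO: as $\sigma$ ranges over $\cI$, the states $\mathcal{E}(\sigma)$ range over the subset $\mathcal{E}(\cI)\subset\cI$, so minimizing $-\log\Tr{A^{*}\,\cdot\,}$ over this smaller set can only raise the value,
\[
\min_{\sigma\in\cI}-\log\Tr{A^{*}\mathcal{E}(\sigma)}\geq\min_{\tau\in\cI}-\log\Tr{A^{*}\tau}=C^{\epsilon}_{\min}(\mathcal{E}(\rho)).
\]
Chaining the two displays gives $C^{\epsilon}_{\min}(\mathcal{E}(\rho))\leq C^{\epsilon}_{\min}(\rho)$, as claimed.

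The step I expect to require the most care is checking that the pullback $A'=\mathcal{E}^{\dagger}(A^{*})$ is still a legitimate measurement operator, i.e. that both operator bounds $0\leq A'\leq\mathbb{I}$ and the success probability $\Tr{A'\rho}\geq1-\epsilon$ survive the adjoint; this is precisely where unitality of $\mathcal{E}^{\dagger}$ (hence trace preservation of $\mathcal{E}$) is indispensable, and the argument would fail for merely trace-nonincreasing maps. The subsequent ``minimize over a smaller set'' inequality is the sole place where the MIO condition $\mathcal{E}(\cI)\subset\cI$ enters, and it is exactly what orients the bound in the right direction.
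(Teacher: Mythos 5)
Your proof is correct and follows essentially the same route as the paper: pull the optimal $A^{*}$ back through $\mathcal{E}^{\dagger}$, check feasibility via complete positivity, unitality, and the adjoint identity, then use $\mathcal{E}(\cI)\subset\cI$ to compare the inner minimizations. In fact your ordering of the steps (minimizing over $\sigma$ before specializing) is slightly cleaner than the paper's, which fixes the optimizer $\sigma_{*}$ of $\min_{\sigma\in\cI}-\log\Tr{A\sigma}$ first and is a touch loose in its opening inequality, but the underlying argument is identical.
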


\begin{proof}
Let $0\leq A\leq \mathbb{I}$ and $\Tr{A\mathcal{E}(\rho)}\geq 1-\epsilon$ such that
\begin{eqnarray*}
C^{\epsilon}_{\min}(\mathcal{E}(\rho))
=\min_{\sigma\in\cI}-\log\Tr{A\sigma}
=-\log\Tr{A\sigma_*}.
\end{eqnarray*}
Then
\begin{eqnarray*}
C^{\epsilon}_{\min}(\rho)
&\geq& -\log\Tr{\mathcal{E}^\dag(A)\sigma_*}\\
&=&-\log\Tr{A\mathcal{E}(\sigma_*)}\\
&\geq& \min_{\sigma\in\cI}-\log\Tr{A\sigma}\\
&=&C^{\epsilon}_{\min}(\mathcal{E}(\rho)),
\end{eqnarray*}
where the first inequality comes from the fact that
$\Tr{\mathcal{E}^\dag(A)\rho}=\Tr{A\mathcal{E}(\rho)}\geq 1-\epsilon$ and
$0\leq\mathcal{E}^\dag(A)\leq \mathbb{I}$ as $0\leq A\leq \mathbb{I}$ and
$\mathcal{E}^\dag$ is unital.
\end{proof}

\begin{mproof}[Proof of Equation (17)]
Suppose that $\mathcal{E}$ is the optimal MIO such that
$F(\mathcal{E}(\rho), \proj{\Psi^M_+})^2\geq1-\epsilon$ with $\log M=C^{(1),\epsilon}_{D,MIO}(\rho)$.
By Lemma \ref{lem:smo_min}, we have
\begin{eqnarray*}
C^{\epsilon}_{\min}(\rho)
&\geq& C^{\epsilon}_{\min}(\mathcal{E}(\rho))\\
&=&\max_{\substack{
0\leq A\leq \mathbb{I}\\
\Tr{A\mathcal{E}(\rho)}\geq 1-\epsilon}}
\min_{\sigma\in\cI}-\log\Tr{A\sigma}\\
&\geq& \min_{\sigma\in\cI}-\log\Tr{\proj{\Psi^M_+}\sigma}\\
&=&\log M=C^{(1),\epsilon}_{D,MIO}(\rho),
\end{eqnarray*}
where the second inequality comes from the fact that
$0\leq \proj{\Psi^M_+}\leq \mathbb{I}$ and
$\Tr{\proj{\Psi^M_+}\mathcal{E}(\rho)}=F(\mathcal{E}(\rho),\proj{\Psi^M_+})^2\geq1-\epsilon$.

\end{mproof}

\end{document}